\numberwithin{equation}{section}
\let\@fnsymbol\@zero
 \DeclareMathOperator{\sech}{sech}
\newcommand{\nvarphi} {\varphi^{\rm st}}
\newtheorem{theorem}{Theorem}[section]
\newtheorem{prop}[theorem]{Proposition}
\newtheorem{corollary}[theorem]{Corollary}
\newtheorem{problem}[theorem]{Problem}
\newtheorem{remark}[theorem]{Remark}
\begin{document} 

\title{The Schr\"odinger equation for the Rosen-Morse type potential revisited with applications}

\author[1]{Guillermo  Gordillo-Núñez\thanks{emails: \texttt{guillegordii@gmail.com, ran@us.es, niurka@us.es}}}
\affil[1]{Departamento de An\'alisis Matem\'atico, Universidad de Sevilla, c/Tarfia s/n, 41012, Sevilla, Spain}
\author[2]{Renato Alvarez-Nodarse}
\affil[2]{IMUS, Universidad de Sevilla, Departamento de An\'alisis Matem\'atico, Universidad de
Sevilla, c/Tarfia s/n, 41012, Sevilla, Spain}
\author[3]{Niurka R.\ Quintero}
\affil[3]{F\'\i sica Aplicada I, 
 Escuela T\'ecnica Superior de Ingenier\'\i a Inform\'atica,
Universidad de Sevilla, Avenida Reina Mercedes s/n, 41012 Sevilla, Spain}
\date{ \today}

\maketitle

\begin{abstract}
We rigorously solve the time-independent Schr\"odinger equation for the Rosen-Morse type potential. By using the Nikiforov-Uvarov method, we obtain, in a systematic way, the complete solution of such equation, which includes the so-called bound states (square-integrable solutions) associated with the discrete spectrum, as well as unbound states region (bounded but not necessarily square-integrable solutions) related to the continuous part of the spectrum.  The resolution of this problem is used to show that the  kinks of the non-linear Klein-Gordon equation with $\varphi^{2p+2}$ type potentials are stable. We also derive the orthogonality and completeness relations satisfied by the set of eigenfunctions which are useful in the description of the dynamics of kinks under perturbations or interacting with antikinks. 

\end{abstract}

\section{Introduction\label{intro}}  
 One-dimensional non-linear Klein-Gordon potentials, $U(\varphi)$, with multiple vacuum states support topological solitary waves \cite{fullin:1978,condat:1983,dashen:1974,lohe:1979,saxena:2019}, and even 
solitons in the case of the sine-Gordon potential \cite{rubinstein:1970}. More specifically, these  waves are solutions of the following non-linear Klein-Gordon equation, 
\begin{equation} \label{eq:nlkg}
	\varphi_{tt}-\varphi_{xx} + U'(\varphi)=0,
\end{equation}
where the subscripts $t$ and $x$ henceforth indicate the partial derivatives with respect to time and space, respectively, and  the prime denotes the derivative of the potential $U(\varphi)$ with respect to the field $\varphi$. 
In the case of kinks, these topological waves connect two minima of the potential, and  verify that
\begin{eqnarray}
	\label{sgbc2}
	\lim_{x \to + \infty}\varphi(x,t)=\lim_{x \to - \infty}\varphi(x,t) \pm Q, 
\end{eqnarray}
where the signs $\pm$ refer to the kink and anti-kink solution, respectively, and the constant $Q > 0$ is the so-called topological charge. Eq. \eqref{eq:nlkg} for different potentials $U(\varphi)$ and, consequently, kinks arising from them, has a vast list of applications (see the review by Campbell in Ref. \cite{campbell:2019}). For instance, the continuous second-order phase transition in the Ginzburg-Landau phenomenological theory can be explained through them \cite{landau:1950,ginzburg:2009,khare:2014}. Furthermore, kinks associated to the $\varphi^4$ type potentials are employed as mobile domain walls \cite{krumhansl:1975,koehler:1975,aubry:1976a}. In addition, the $\varphi^4$ kink with an excited internal mode leads to 
the fading of the  kink's wobbling, which can be explained by using a multiple scale expansion  \cite{barashenkov:2009,barashenkov:2019}. Nevertheless, the complete equations that model these kind of phenomena also contain additional terms related to external forces and damping. For these reasons, a study of the kinks' stability is necessary to establish observability of these particular solutions in real systems.

In particular, the stability of sine-Gordon and $\varphi^4$  kinks is determined by the resolution of certain Sturm-Liouville problem equivalent to the Schr\"odinger equation for the P\"oschl-Teller type potential \cite{poschl:1933}, $-l\,(l+1)\,\sech^2(x)$  for $l \in \mathbb{N}$  \cite{rubinstein:1970,dashen:1974,raban:2022},
which is symmetric and vanishes at the boundaries. In \cite{raban:2022}, it was found that the spectrum of the Schr\"odinger equation, composed of $l$ discrete eigenvalues and a continuous part, forms a complete set of eigenfunctions, and that these kinks are stable given that all the eigenvalues of the associated Sturm-Liouville problem are non negative.       

Actually, the knowledge of the spectrum is important not only because it determines the stability of the non-linear waves through non-negativity of its eigenvalues. In case of appearance of an internal mode, the loss of integrability is predicted, and, in addition, it is considered as a sufficient condition for inelastic collisions of solitary waves  \cite{bogdan:1990,kevrekidis:2001}. For example, when a kink and an antikink of a $\varphi^4$ system collide, they cannot pass through each other. Instead, depending on theirs initial velocities, they may form a bound state or be reflected, or, even, resonance windows may appear \cite{getmanov:1976,aubry:1976,makhankov:1978,sugiyama:1979}. 
These resonance windows, which are regions in which reflection and
trapping alternate until eventually the waves escape, were explained in terms of an exchange of energy  between translational and internal modes in the case of the $\varphi^4$ model \cite{campbell:1983}. 
Specifically, when the kink and the antikink meet, their internal modes are excited, and the energy oscillates between their respective translational and internal modes. Therefore, if appropriate resonance conditions are fulfilled, the energy can be restored to the translational modes and the kink and antikink simply move away from each other. This explanation was also supported by the resonance windows observed (or absent) in the kink-antikink collisions in the modified sine-Gordon equation with (or without) internal modes, respectively \cite{peyrard:1983}. 

However, resonance windows  were reported in the absence of internal modes in certain   
kink-antikink collisions for the $\varphi^6$ equation \cite{dorey:2011}. In this case, the non-linear Klein-Gordon potential has 3 absolute minima (vacua), with four exact solutions, two kinks and two antikinks \cite{khare:1979}. For the $\varphi^6$ equation, the Sturm-Liouville problem in whose resolution is based the linear stability analysis of each of these waves is no longer equivalent to the one-dimensional Schr\"odinger equation for the P\"oschl-Teller type potential. In fact, it is equivalent to the Schr\"odinger equation for the Rosen-Morse type potential \cite{rosen:1932,lohe:1979}
\begin{equation}
	v(z) =  v_{0} \cosh^{2}{\mu}\left(\tanh{z} + \tanh{\mu} \right)^{2},
	\label{rmpot}
\end{equation}
where $v_0>0$ and $\mu > 0$ are real parameters. This potential is also known as the Eckart potential \cite{eckart:1930} or as the Morse-Feshbach potential 
\cite{morse2:1953} and, in a way, constitutes a generalization of the P\"oschl-Teller potential as the latter can be obtained from \eqref{rmpot} by setting $\mu=0$ and changing the origin of energies. 

The Rosen-Morse potential is one of the solvable potentials \cite{cooper:1995} and is not symmetric, unlike the P\"oschl-Teller potential. Its resolution, in relation to the $\varphi^6$ system, shows that the discrete spectrum has only one mode, the so-called Goldstone mode (or the aforementioned translational mode) \cite{goldstone:1975}, i.e., there are no internal modes \cite{lohe:1979}. Nonetheless, as we advanced, the authors of \cite{dorey:2011} observed resonance windows in collisions between a kink and an antikink starting with an initial configuration of zero topological charge. Interestingly, this phenomenon was explained by the superposition of two Rosen-Morse potentials, one corresponding to the kink and the other to the antikink \cite{dorey:2011,roman:2019}. The shape of this effective potential is a well, where the meson bound states are formed, playing a similar role of that internal mode in the aforementioned sine-Gordon and $\varphi^4$ systems \cite{dorey:2011}. 
The meson bound states are described by the continuous spectrum of the Sturm-Liouville problem. They are also used to explain the effect of negative radiation pressure in $\varphi^4$ and $\varphi^6$ models, where the acceleration of the kinks towards the radiation was theoretically predicted and confirmed by simulations   \cite{forgacs:2008,roman:2019}. 
In the context of scalar field theory with polynomial self-interactions, the $\varphi^6$ potential has a positive mass-square unlike the  $\lambda \varphi^4$ model \cite{khare:1979}. An advantage of $\varphi^6$ and higher-order field theories  is that they can have more than two equilibrium state (vacua), such that, although $\varphi^6$ models a discontinuous first-order phase transition, $\varphi^8$, $\varphi^{10}$ and successive systems can describe successive phase transitions \cite{saxena:2019}.

Last but not least, regarding the Sturm-Liouville problem associated to the linear stability analysis, its full resolution is of importance as it provides an orthogonal and complete set of eigenfunctions, which can be used as a tool in the description of the dynamics of kinks under perturbations or interacting with antikinks \cite{barashenkov:2009,barashenkov:2019,kiselev:1998,sugiyama:1970}. More specifically, the unknown solitary wave solution of the perturbed equation may be expanded in this complete set of eigenfunctions \cite{bishop:1980,dauxois:2006}. Special significance is attached to
the Goldstone mode, since it determines the
motion of the soliton's center of mass.

The Schrödinger equation for the Rosen-Morse potential was introduced by Rosen and Morse  for describing the energy level structure of polyatomic molecules in Ref. \cite{rosen:1932},   where
after two successive and non-trivial changes of variables 
the bound states were obtained 
in terms of the Gauss's hypergeometric function. 
The most general study of this equation, 
to the best of our knowledge, is provided in  \cite[\S12.3]{morse2:1953} and most of the articles making use 
of its solutions (see for instance the Ref.\ \cite{lohe:1979}) refer to it. Nevertheless, it could be argued that the analysis of this equation 
carried out in Ref.  \cite{morse2:1953} is incomplete. 
To begin with, the transformations used to reduce the Schrödinger equation to a hypergeometric differential equation, which is discussed in the next section, 
are somewhat ``obscure'' whereas with the Nikiforov-Uvarov (NU) method they can be easily followed. 
Furthermore, when dealing with the bound states region, it is unknown whether the bound solutions obtained are the only 
square-integrable eigenfunction and, when dealing with the unbound states region, degeneracy of the continuous spectrum is not studied which leads to an incomplete solution of the problem. 
This \textit{incompleteness} makes the kink stability analysis previously mentioned not entirely exhaustive.

For these reasons, the aim of the current investigation is precisely to complete this study by using NU method \cite[\S1, page 1]{nu:1988}. 
The NU method, was developed by Nikiforov and Uvarov in the seventh decade of the XX century 
 in order to obtain the solutions of the differential equations  of hypergeometric type.
The idea is to transform an ordinary differential equation, the so-called generalized hypergeometric equation (see Eq.\ \eqref{GHE}), into 
the very well-known equation of hypergeometric type  (see Eq.\ \eqref{HDE}) by a viable change of variable in a constructive way.
 Among the solutions of the equations of hypergeometric type \eqref{HDE} are the classical orthogonal 
polynomials (Jacobi, Laguerre, Hermite) as well as the Gauss's hypergeometric functions.
In this way, the NU method constitutes a constructive approach for solving the  Schrödinger equation
for a wide class of potentials such as the potential well associated to the harmonic oscillator, 
the central attractive field  and the Coulomb potential, among others \cite[\S 1]{nu:1988}. 
In fact, this method was also used for finding the solution of the Schrödinger equation 
with a $q$-deformation of the Rosen-Morse potential in Ref. \cite{rezaei:2008} (see also references therein), where  
only the bound states were found.

The paper is organized as follows. 
In Sec. \ref{sec2}, by using the Nikiforov-Uvarov method, we fully solve the one-dimensional Schr\"odinger  equation  
with the Rosen-Morse potential, and obtain 
all eigenfunctions and eigenvalues corresponding not only to the bound states, but also to the continuous spectrum.  It is one of the
main goal of this paper as it 
enables the orthogonality and completeness of its  eigenfunctions to be proved 
 at the end of Sec. \ref{sec2},  
and, moreover, to explicitly obtain the so-called completeness relations. 
Section \ref{sec3} discusses some representative examples, 
including the linear stability analysis of the kinks of $\varphi^{2p+2}$ non-linear Klein-Gordon potentials, where the results of the previous section are used. 
  Sec. \ref{sec5} concludes  with a summary of our main results, and with a guide to solve the time-independent Schrödinger equation with the Rosen-Morse potential.

\section{The Schrödinger equation for the Rosen-Morse type potential \label{sec2}}

Let us begin by considering the time-independent Schrödinger equation
\begin{equation}
\left[\frac{d^2}{dz^{2}} - v(z) + \varepsilon \right] \psi(z) = 0, \quad z \in \mathbb{R}
\label{schrEqMFz}
\end{equation}
for the Rosen-Morse type potential $v(z)$ defined in Eq. \eqref{rmpot} with positive parameters $v_0$ and $\mu$. Here, $\varepsilon \in \mathbb{R}$ 
acts as the eigenenergy that needs to be found in order to solve 
the eigenvalue problem.

The cases involving a negative $\mu$ can be reduced to the previous ones, whereas whenever $v_0$ is negative the problem is 
completely different and we shall not consider it. In addition, the computations that we shall show can be particularized in a straightforward manner for the special case $\mu=0$, which leads to the so-called P\"oschl-Teller potential. Nevertheless, since there are some minor aspects to distinguish between the symmetric ($\mu = 0$) and asymmetric ($\mu \neq 0$) cases, we shall first deal with the latter and, 
 \textit{a posteriori}, we will show how the results obtained are adapted to the case $\mu=0$. 
With this choice of parameters ($v_0,\, \mu \in \mathbb{R}^{+} \setminus \{0\}$), the potential $v(z)$, resembles the diagram represented in Fig.\ \ref{fig:rmp}.
\begin{figure}[ht!]\centering
	\includegraphics[width=8cm]{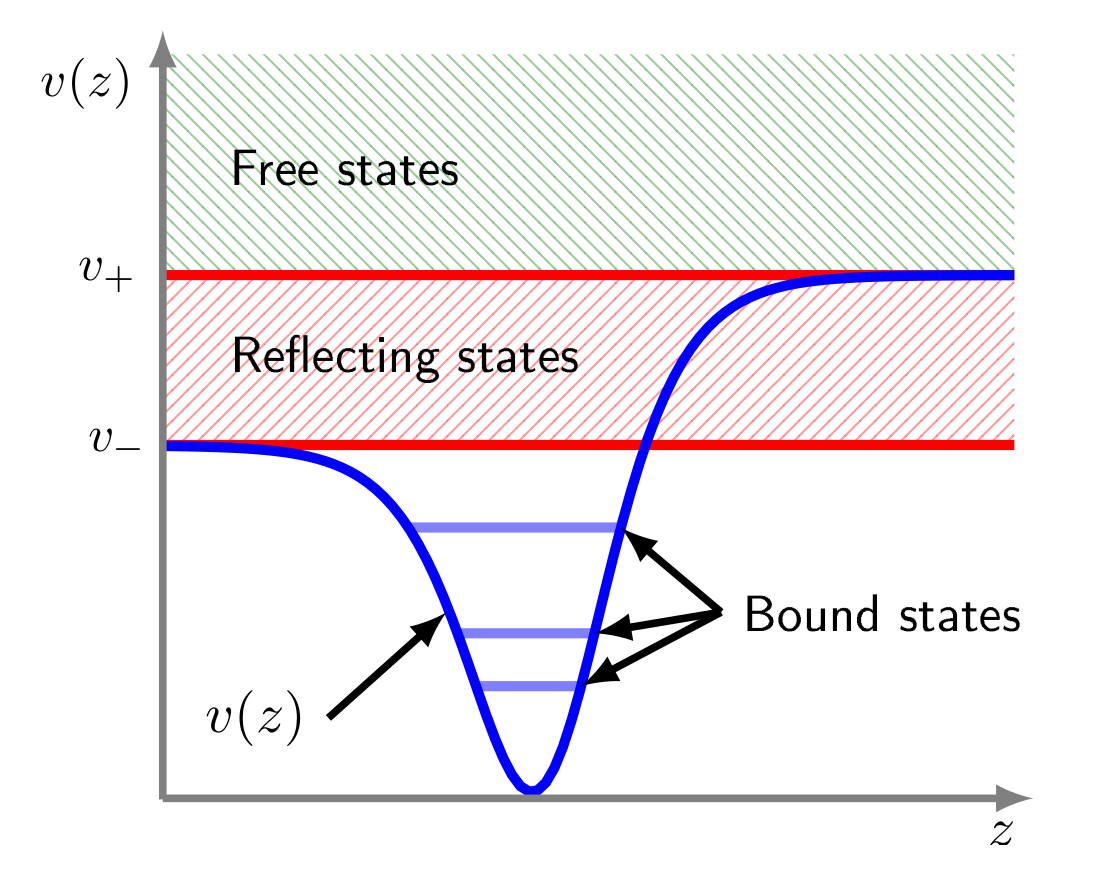}
\caption{The Rosen-Morse Potencial, $v(z) = v_{0} \cosh^{2}{\mu}\left[\tanh \left(z\right) + \tanh{\mu} \right]^{2}$, is represented for $\mu \ne 0$.  
The asymptotic values $v_{\pm}$ are given by Eq.\ \eqref{v+-}. 
 \label{fig:rmp}}
\end{figure}

Therefore, it is clear that the potential asymptotic behaviour is asymmetric  ($\mu \neq 0$) and, in this way, we may distinguish three different ranges of energy which shall result in three different states regions:
one bound and two unbound states regions (see Fig. \ref{fig:rmp}). In addition, these regions are delimited by the asymptotic values,
\begin{equation}
\label{v+-}
	v_\pm = \lim_{z \to \pm \infty} v(z) = v_{0}e^{\pm 2 \mu} \in \mathbb{R}^{+} \setminus \{0\},
\end{equation}
which satisfy that $0 < v_- < v_+$. 
Consequently, we define the three different energy states regions as follows:
\begin{enumerate}[label=\text{R\arabic*}]
\item \label{R1} For $\varepsilon \in (0,v_-)$ we distinguish the so-called \textit{bound states region}, where the particle would be classically confined in a finite region of space and 
the eigenvalue problem should have a nontrivial function, $\psi$, satisfying that $\psi \in L^{2}(\mathbb{R})$, this is, the space of square-integrable functions in 
$\mathbb{R}$.
\item \label{R2} For $\varepsilon  \in (v_-,v_+)$ we distinguish the so-called \textit{reflecting states region} which constitutes an unbound states region: the particle could (classically) reach $-\infty$ but not $+\infty$ and square-integrability condition on its eigenfunction could be dropped and substituted by $\psi \in L^{\infty}(\mathbb{R})$, this is, the space of bounded functions on $\mathbb{R}$. 
\item \label{R3} For $\varepsilon  \in (v_+,\infty)$ we distinguish the so-called \textit{free states region} which also constitutes an unbound states region: the particle could reach 
any point in the one-dimensional space and its eigenfunction would only need to be bounded, i.e., $\psi \in L^{\infty}(\mathbb{R})$.
\end{enumerate}
In addition, apart from the above three states regions, we need also to consider the three special states corresponding to $\varepsilon=0,v_{\pm}$.

The solutions of the Eq. \eqref{schrEqMFz} belonging to the first region \ref{R1} will be called \textit{bound solutions}, meaning the potential \textit{binds} the particle to a finite space, 
whereas the solutions for the last two regions \ref{R2} and \ref{R3} will be called \textit{unbound solutions}.

From the mathematical point of view we are interested in solving the following problem:
\begin{problem}
\label{schroProb}
Find all eigenvalues $\varepsilon \in \mathbb{R}$ for which the time-independent Schrödinger equation \eqref{schrEqMFz}
has non-trivial eigenfunctions, $\psi$, satisfying that 
\begin{itemize}
\item $\psi \in L^{2}(\mathbb{R})$ for the bound states region, that is, for $\varepsilon \in (0,v_-)$ and
\item $\psi \in L^{\infty}(\mathbb{R})$ for the unbound states regions, that is, for $\varepsilon \in (v_-, \infty)$.
\end{itemize}

\end{problem}
In other words, we will consider a Sturm-Liouville problem with boundary conditions given by $\psi \in L^{2}(\mathbb{R})$ or 
$\psi \in L^{\infty}(\mathbb{R})$ depending on the peculiarities of each region.

Before applying the NU method,  we set \(u = -\tanh{z}\) in Eq.\ \eqref{schrEqMFz} and, by dividing the obtained equation by $(1-u^2)^2$, it becomes into a generalized hypergeometric equation
 or GHE 
\begin{equation}
\psi''(u)+\frac{\widetilde{\tau}(u)}{\sigma(u)}\psi'(u)+\frac{\widetilde{\sigma}(u)}{\sigma^2(u)}\psi(u)=0, \quad u \in (-1,1),
\label{GHE}
\end{equation}
being $\sigma(u)$, $\Tilde{\sigma}(u) \in \mathbb{C}_2[u]$, and $\Tilde{\tau}(u)\in \mathbb{C}_1[u]$,
where $\mathbb{C}_p[u]$ denotes the set of polynomials with complex coefficients and degree at most $p$
on the variable $u$. In particular, for the computations described,
\begin{equation}
\widetilde{\tau}(u)=-2u, \quad \sigma(u) = 1-u^{2}, \quad  \widetilde{\sigma}(u;\varepsilon) = \varepsilon-v_{0}\cosh^{2}{\mu}\left(u - \tanh{\mu}\right)^{2}.
\label{gheMFPar}
\end{equation}
The idea behind the NU method can be summarize as follows: to find the solution of \eqref{GHE} in the form 
\begin{equation}
\label{anz}
\psi(u)=\phi(u)y(u),
\end{equation}
for a convenient function $\phi(u)$ in such a way that the obtained equation for $y(u)$ is no more complicated than 
the original GHE \eqref{GHE}. In fact, by choosing a $\phi(u)$ fulfilling
\begin{equation}
\frac{\phi'(u)}{\phi(u)} = \frac{\pi(u)}{\sigma(u)}, 
\label{phi}
\end{equation}
for an arbitrary $\pi(u) \in \mathbb{C}_1[u]$, function $y(u)$ satisfies another GHE of the form
\begin{equation} y''(u)+\frac{\tau(u)}{\sigma(u)}y'(u)+\frac{\overline{\sigma}(u)}{\sigma^2(u)}y(u)=0,
\label{GHEy}
\end{equation}
where
\begin{eqnarray}
\tau(u)&=&\widetilde{\tau}(u)+2\pi(u), \label{tau} \\
\overline{\sigma}(u)&=&\widetilde{\sigma}(u)+\pi^2(u)+
\pi(u)[\widetilde{\tau}(u)-\sigma'(u)]+\pi'(u)\sigma(u).
\label{sigmabar}
\end{eqnarray}
Notice that $\tau(u) \in \mathbb{C}_1$ and $\overline{\sigma}(u) \in \mathbb{C}_2[u]$ since we have set $\pi(u) \in \mathbb{C}_1[u]$.
Furthermore, if we choose a proper $\pi(u)$, which is equivalent to choosing 
$\phi(u)$ or $\tau(u)$ accordingly to Eqs.\ \eqref{phi} and \eqref{tau}, respectively, then $\overline{\sigma}(u)$ can be forced to be proportional to $\sigma(u)$, that is, 
$\overline{\sigma}(u) = \lambda\sigma(u),$ for an unkown $\lambda \in \mathbb{C}$. In this way, Eq.\ \eqref{GHEy} is reduced to
\begin{equation} \label{HDE}
\sigma(u) y''(u) + \tau(u) y'(u) +\lambda y(u)=0,
\end{equation}
which, indeed, has a simpler form than Eq.\ \eqref{GHE}. Equation \eqref{HDE}, whenever $\sigma(u) \in \mathbb{C}_2[z]$, $\tau(u)\in \mathbb{C}_1[z]$ and $\lambda \in \mathbb{C}$, 
is said to be a hypergeometric differential equation or HDE. 

Next, in order to make this result actually useful, we shall show a way to obtain a proper $\pi(u) \in \mathbb{C}_1[u]$ so that Eq.\ \eqref{GHEy} is, indeed, transformed into Eq.\ \eqref{HDE}. First of all, imposing condition $\overline{\sigma}(u) = \lambda\sigma(u)$ on Eq. \eqref{sigmabar} yields the following equation for the polynomial $\pi(u)$ and the unknown $k \in \mathbb{C}$,
\begin{equation}\label{eq-pi} 
\pi^2(u)+[\widetilde{\tau}(u)-\sigma'(u)]\pi(u)+ [\widetilde{\sigma}(u)-k\sigma(u)]=0,
\end{equation}
where, notice, 
\begin{equation}\label{eq-la} 
\lambda=k+\pi'(u), 
\end{equation}
that is, $\lambda$ is determined through $k$ and $\pi(u)$. 

In order to find $\pi(u)$ and $k$, for the sake of simplicity, we deviate from the standard procedure explained 
in \cite[\S1, page 1]{nu:1988}.  
Since $\pi(u)\in\mathbb{C}_1[u]$, we can write it as $\pi(u)=a-b\,u$, where $a, b \in \mathbb{C}$ are two coefficients to be determined. By inserting it in the above Eq. \eqref{eq-pi} and taking into
account Eq. \eqref{gheMFPar}, we obtain the following system of equations
\begin{equation}
\label{sys-abk}
a^2-k+\varepsilon-v_0\sinh^2\mu=0,\quad 2ab-v_0\sinh(2\mu)=0,\quad b^2+k-v_0\cosh^2\mu=0,
\end{equation}
that has four possible solutions, $(a_{j},b_{j},k_{j}(\epsilon))$ for $j=1, 2, 3, 4$, which define four pairs of $(\pi_j(u;\varepsilon), k_j(\varepsilon))$ as follows
\begin{equation}
\label{sol-pi}
\begin{aligned}
\pi_1(u;\varepsilon)&=\frac{\varkappa_+(\varepsilon) - \varkappa_-(\varepsilon)}{2}-\frac{\varkappa_+(\varepsilon) + \varkappa_-(\varepsilon)}{2}u,  && k_1(\varepsilon)= \frac{\varepsilon+v_0}{2}-\frac{\varkappa_+(\varepsilon) \varkappa_-(\varepsilon)}{2},\\
\pi_2(u;\varepsilon)&=-\pi_1(u;\varepsilon), && k_2(\varepsilon)=k_1(\varepsilon),\\ 
\pi_3(u;\varepsilon)&=\frac{\varkappa_+(\varepsilon) + \varkappa_-(\varepsilon)}{2}-\frac{\varkappa_+(\varepsilon) - \varkappa_-(\varepsilon)}{2}u,  && k_3(\varepsilon)= \frac{\varepsilon+v_0}{2}+\frac{\varkappa_+(\varepsilon) \varkappa_-(\varepsilon)}{2},\\
\pi_4(u;\varepsilon)&=-\pi_3(u;\varepsilon), && k_4(\varepsilon)=k_3(\varepsilon),\\ 
\end{aligned}
\end{equation}
where 
\begin{equation}\label{varkappa}
\varkappa_+(\varepsilon)=\sqrt{v_{+}-\varepsilon},\quad \varkappa_-(\varepsilon)= \sqrt{v_{-}-\varepsilon}.
\end{equation}
These four solutions are a direct consequence of two symmetries: (i) Eq.\ \eqref{eq-pi}  is invariant under the transformation $\pi \mapsto -\pi$ for a fixed $k$, since $\tilde{\tau}(u)=\sigma'(u)$ accordingly to Eq. \eqref{gheMFPar}; and
(ii) the system of equations \eqref{sys-abk} is also invariant if we interchange $a$ and $b$ and simultaneously replace $k$ by $-k+v_0+\epsilon$. Thus, any solution of \eqref{sol-pi} can be obtained by employing these symmetries into one of them. 

In addition, we can make use of Eq.\ \eqref{phi} to find that
\begin{equation}
\phi(u;\varepsilon)  = \left(1-u\right)^{\frac{\varkappa_-(\varepsilon)}{2}}\left(1+u\right)^{\frac{\varkappa_+(\varepsilon)}{2}}.
\label{rhophiMF}
\end{equation}
Moreover, from Eqs. \eqref{gheMFPar}, \eqref{tau} and \eqref{eq-la} we obtain the pairs $(\tau_j(u;\varepsilon),\lambda_j(\varepsilon))$ as follows
\begin{equation}
\tau_{j}(u;\varepsilon) = -2u +2\pi_j(u), \quad \lambda_j(\varepsilon) = k_{j}(\varepsilon) +\pi_j'(u,\varepsilon),\quad j=1,2,3,4.
\label{taulambda}\
\end{equation}
In such manner, we have obtained four different HDE \eqref{HDE} whose resolution is equivalent to that of the original GHE \eqref{GHE} and,  therefore, our next step is to make an appropriate choice of the HDE to solve attending to the peculiarities of each energy states region.

\subsection{Bound states region, $\varepsilon \in (0,v_-)$}

For the bound states region, $\varepsilon \in (0,v_-)$,  the solutions of Eq. \eqref{schrEqMFz} should be of integrable square 
(recall shape of Problem \ref{schroProb}). In order to obtain all and only functions of this kind, we shall make use of Theorem \cite[page 67]{nu:1988} and, to that purpose, the Remark in \cite[page 67]{nu:1988}
(see Theorem \ref{eigenTheo} and Proposition \ref{easesChoiceProp}, respectively) points us towards choosing a $\tau(u)$ satisfying that it vanishes at some point of $(-1,1)$ and whose derivative $\tau'(u) < 0$. 

Notice that, from Eq. \eqref{taulambda}, $\tau_2$ and $\tau_4$ do not satisfy 
the condition $\tau'(u)<0$ for any positive value of $v_0$ and $\mu$. Moreover, straightforward calculations show that only 
$\tau_1$ has a zero inside the interval $(-1,1)$ for all values of $\varepsilon \in (0,v_-)$, and every $v_0>0$ and $\mu>0$. For these very reasons, we choose to solve the HDE \eqref{HDE} associated to the pair $(\tau_1(u;\varepsilon),\lambda_1(\varepsilon))$.

Therefore, we look for solutions, $y(u;\varepsilon)$, of the following HDE
\begin{equation}
\left(1-u^{2}\right)y''(u)+2\left[a(\varepsilon)
-\left(b(\varepsilon)+1\right)u\right]y'(u)+
\left(k(\varepsilon) - b(\varepsilon)\right)y(u) = 0,
\label{hdeMF}
\end{equation}
where $u \in (-1,1)$ and  $k(\varepsilon)$, $a(\varepsilon)$, and $b(\varepsilon)$, are given by
\begin{equation}\label{R1ab}
k(\varepsilon)=\frac{\varepsilon+v_0}{2}-\frac{\varkappa_+(\varepsilon) \varkappa_-(\varepsilon)}{2},\quad
a(\varepsilon)=\frac{\varkappa_+(\varepsilon) - \varkappa_-(\varepsilon)}{2},\quad b(\varepsilon)=\frac{\varkappa_+(\varepsilon) + \varkappa_-(\varepsilon)}{2},
\end{equation}
respectively. 
Accordingly to above analysis,  the possible solutions of Eq.\ \eqref{schrEqMFz}, $\psi_{\varepsilon}(z)$, can be written in the form $\psi_{\varepsilon}(z) \propto \phi(-\tanh{z};\varepsilon)y(-\tanh{z},\varepsilon)$, where $\phi(-\tanh{z};\varepsilon)$ is given by Eq. \eqref{rhophiMF}.

Thus, the Nikiforov-Uvarov Theorem \cite[page 67]{nu:1988} (see Theorem \ref{eigenTheo} in the Appendix \ref{apen}) leads to the following result:

\begin{theorem}
\label{theoBounded}
The only non-trivial, bound solutions of time-independent Schrödinger equation \eqref{schrEqMFz}, $(\varepsilon,\psi_{\varepsilon}(z))$, for $\varepsilon \in (0,v_-)$ satisfying that 
$\psi_{\varepsilon}(z) \in L^2(\mathbb{R})$, that is, the solutions of Problem \ref{schroProb} for region \ref{R1}, are
\begin{equation}
\psi_{n}(z) = \mathcal{N}_n e^{-a_n z} \sech^{b_n}(z) P_{n}^{(b_n-a_n,b_n+a_n)}\left(-\tanh{z}\right),
\label{eigenfunctionsSchroMF}
\end{equation}
where 
$\mathcal{N}_n \in \mathbb{R}$ is a normalizing constant 
such that $\int_\mathbb{R} | \psi_{\varepsilon_n}(z)|^2dx=1$ ($\mathcal{N}_n$ is computed in Appendix \ref{comp-N_n}), 
and their corresponding eigenvalues
$\varepsilon = \varepsilon_n \in (0,v_-)$ fulfill the eigenenergy equation
\begin{equation}
k_n-b_n = n(n+2b_n+1), \quad n \in \mathbb{N} \cup \{0\},
\label{eigenvalueEqMF}
\end{equation}
which has solutions for a finite number of integers $n$ (see Remark \ref{finitenessOfBoundSolutionsRemark} for further explanations), where $a_n \coloneqq a(\varepsilon_n)$, $b_n \coloneqq b(\varepsilon_n)$ and $k_n \coloneqq k(\varepsilon_n)$. Here, $P_{n}^{(b_n-a_n,b_n+a_n)}(x)$ stands for the classical Jacobi polynomials (see Proposition \ref{jac-pol}).
\end{theorem}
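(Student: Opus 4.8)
The plan is to apply the Nikiforov--Uvarov eigenvalue theorem (Theorem \ref{eigenTheo}) to the HDE \eqref{hdeMF}, whose solution is equivalent — via the substitutions $u=-\tanh z$ and $\psi=\phi\,y$ with $\phi$ given in \eqref{rhophiMF} — to the original Schrödinger problem \eqref{schrEqMFz} on the bound-state energy range. First I would verify that the choice $j=1$ is legitimate: by the Remark in \cite[page 67]{nu:1988} (our Proposition \ref{easesChoiceProp}), polynomial solutions of \eqref{HDE} that yield square-integrable $\psi$ are selected by requiring $\tau_1'(u)<0$ and that $\tau_1$ have a zero in $(-1,1)$; from \eqref{taulambda} and \eqref{R1ab} one computes $\tau_1'(u)=-2(b(\varepsilon)+1)<0$ for all $\varepsilon\in(0,v_-)$ (since $b(\varepsilon)>0$), and the zero of $\tau_1$ lies at $u=a(\varepsilon)/(b(\varepsilon)+1)$, which is easily checked to lie in $(-1,1)$ because $|a(\varepsilon)|<b(\varepsilon)+1$. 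This justifies the selection made before the statement.

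Next I would invoke Theorem \ref{eigenTheo}: the HDE \eqref{hdeMF} admits a polynomial solution $y=y_n$ of degree $n$ precisely when $\lambda=\lambda_n$ satisfies the quantization condition $\lambda_n=-n\tau'-\tfrac{1}{2}n(n-1)\sigma''$; with $\sigma(u)=1-u^2$ so $\sigma''=-2$, $\tau'=\tau_1'=-2(b_n+1)$, and $\lambda_1(\varepsilon)=k_1(\varepsilon)+\pi_1'(\varepsilon)=k(\varepsilon)-b(\varepsilon)$ by \eqref{taulambda}, this reduces exactly to \eqref{eigenvalueEqMF}, namely $k_n-b_n=n(n+2b_n+1)$. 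The eigenfunction is then $\psi_n(z)\propto\phi(-\tanh z;\varepsilon_n)\,y_n(-\tanh z;\varepsilon_n)$; identifying $y_n$ via the Rodrigues formula of Theorem \ref{eigenTheo} with the Jacobi polynomial $P_n^{(b_n-a_n,\,b_n+a_n)}$ (Proposition \ref{jac-pol}), and rewriting $\phi$ using $1-u=1+\tanh z$, $1+u=1-\tanh z$, $1-u^2=\sech^2 z$, and $(1-u)/(1+u)=e^{2z}$, gives $\phi(-\tanh z;\varepsilon_n)\propto e^{-a_n z}\sech^{b_n}(z)$, which is \eqref{eigenfunctionsSchroMF}. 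The normalization constant $\mathcal{N}_n$ is deferred to Appendix \ref{comp-N_n}.

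Two points still need care. \emph{Completeness of the list (no other bound solutions).} I must argue that \eqref{eigenfunctionsSchroMF} exhausts the $L^2(\mathbb{R})$ eigenfunctions — not merely the polynomial ones. For each fixed $\varepsilon\in(0,v_-)$ the transformed ODE \eqref{hdeMF} is second order, so its solution space is two-dimensional; the factor $\phi=(1-u)^{\varkappa_-/2}(1+u)^{\varkappa_+/2}$ already decays like $e^{-\varkappa_+ z}$ as $z\to+\infty$ and $e^{-\varkappa_- z}$ as $z\to-\infty$, while the second linearly independent solution of \eqref{hdeMF} grows at the endpoints $u=\pm1$ fast enough to destroy integrability (this is the standard Frobenius analysis at the regular singular points $u=\pm1$, where the indicial exponents differ by $\varkappa_\pm$). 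Hence square-integrability forces $y$ to be the polynomial branch, and polynomial solutions exist only at the quantized energies. \emph{Finiteness of $n$.} One shows $\varepsilon_n\mapsto k_n-b_n$ is (eventually) increasing while the right side $n(n+2b_n+1)$ grows quadratically and the constraint $\varepsilon_n\in(0,v_-)$ bounds $b_n$, so \eqref{eigenvalueEqMF} has only finitely many integer solutions; the details are relegated to Remark \ref{finitenessOfBoundSolutionsRemark}.

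The main obstacle is the first point above: turning the NU machinery, which natively produces only the polynomial eigenfunctions, into a genuine uniqueness statement for $L^2(\mathbb{R})$ solutions. This requires a careful endpoint (Frobenius/asymptotic) analysis of the second solution of \eqref{hdeMF} near $u=\pm1$ to rule out any non-polynomial square-integrable eigenfunction; everything else is bookkeeping in translating between the $u$ and $z$ variables and reading off the Jacobi normalization.
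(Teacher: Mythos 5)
Your construction of the eigenpairs coincides with the paper's: the selection of $(\tau_1,\lambda_1)$ via Proposition \ref{easesChoiceProp}, the quantization $\lambda_n=-n\tau'-\tfrac12 n(n-1)\sigma''$ reducing to \eqref{eigenvalueEqMF}, the identification of the Rodrigues-formula solutions with $P_n^{(b_n-a_n,\,b_n+a_n)}$, and the rewriting $\phi(-\tanh z;\varepsilon_n)=e^{-a_n z}\sech^{b_n}z$ are exactly the paper's steps (modulo a sign slip: $\phi$ decays like $e^{\varkappa_- z}$, not $e^{-\varkappa_- z}$, as $z\to-\infty$).

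Where you diverge is the exhaustiveness claim (the word ``only'' in the statement), and there your argument as written does not close. You read Theorem \ref{eigenTheo} as a statement about polynomial solutions only, and then try to exclude non-polynomial $L^2(\mathbb{R})$ eigenfunctions by a Frobenius analysis at $u=\pm1$; but the inference ``the second solution grows at the endpoints, hence square-integrability forces $y$ to be the polynomial branch'' is a non sequitur. At each endpoint separately the admissible Frobenius branch (exponent $0$ rather than $-\varkappa_{\mp}$) is a full hypergeometric function, not a polynomial: for a non-quantized $\varepsilon$ there is still a one-dimensional space of solutions making $\phi y$ square-integrable near $z=-\infty$ and another near $z=+\infty$, and the entire content of the quantization is that these two lines coincide only when the series terminates. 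Proving that requires the connection coefficients between the local solutions at $u=1$ and $u=-1$ (the $\Gamma_1,\Gamma_2$ of Eq.\ \eqref{Gamma12}, cf.\ \cite[\S15.10]{olver:2010}), which you acknowledge as ``the main obstacle'' but do not supply. The paper avoids this entirely: Theorem \ref{eigenTheo} is stated for \emph{all} solutions with $y\sqrt{\rho}\in L^2(-1,1)\cap L^\infty(-1,1)$, not just polynomial ones, so the paper only needs the elementary change-of-variables inequality
\begin{equation*}
\int_{-1}^{1}\left|y\sqrt{\rho}\right|^2du=\int_{\mathbb{R}}\left|\psi_\varepsilon\right|^2\sech^2 z\,dz\le\int_{\mathbb{R}}\left|\psi_\varepsilon\right|^2dz
\end{equation*}
to place any candidate $L^2(\mathbb{R})$ eigenfunction within the scope of the theorem's uniqueness clause. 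Either invoke Theorem \ref{eigenTheo} in this full strength together with that inequality, or carry out the two-endpoint connection argument explicitly; as it stands, uniqueness is asserted but not proved.
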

\begin{proof} We use Theorem \ref{eigenTheo}. Since $\rho(u;\varepsilon)$ satisfies the Pearson equation \eqref{rho}, it can be shown that $\sqrt{\rho(u;\varepsilon)}$ satisfies the Eq.\ \eqref{phi}. Therefore, 
$\rho(u;\varepsilon) =\phi^2(u;\varepsilon)$, meaning
\begin{equation} \label{eqrho}
\rho(u;\varepsilon) = \left(1-u\right)^{\varkappa_-(\varepsilon)}\left(1+u\right)^{\varkappa_+(\varepsilon)}. 
 \end{equation}
Moreover, it can be checked that $\rho(u;\varepsilon)$ is bounded and fulfills condition \eqref{orthoPol} of classical orthogonal polynomials, that is, for every fixed $\varepsilon \in (0,v_-)$, it satisfies that 
\begin{enumerate}
\item $\rho(u;\varepsilon) \in L^{\infty}(-1,1)$,
\item $  u^{k} \sigma(u) \rho(u;\varepsilon)\Bigr|_{u=-1,1} = u^{k} \left(1-u\right)^{\varkappa_-(\varepsilon)+1}\left(1+u\right)^{\varkappa_+(\varepsilon)+1}\Bigr|_{u=-1,1}= 0, \quad \forall k \in \mathbb{N} \cup \{0\}$,
\end{enumerate}
which immediately follows from Eq. \eqref{varkappa} ($\varkappa_\pm>0$). Consequently, by virtue of Theorem \ref{eigenTheo} non-trivial solutions of Eq. \eqref{hdeMF}  satisfying that $y(u;\varepsilon)\sqrt{\rho(u;\varepsilon)}$ belongs to $L^{2}(-1,1) \cap L^{\infty}(-1,1)$ exist only when $\lambda = \lambda_n$, $n \in \mathbb{N} \cup \{0\}$, where $\lambda_n$ is given by \eqref{eq-lambda} and translates to the relation \eqref{eigenvalueEqMF}, and have the form $$y_n(u)=\dfrac{B_n}{\rho(u;\varepsilon)}\left[\sigma^n(u)\rho(u;\varepsilon)\right]^{(n)},$$ which leads to the Jacobi polynomials $P_{n}^{(b_n-a_n,b_n+a_n)}\left(u\right)$. In this way, $\psi_n(z)$ in Eq. \eqref{eigenfunctionsSchroMF} are possible solutions of Problem \ref{schroProb} for the bound states region. 

Now, we will prove that $(\varepsilon_n,\psi_n(z))$ are the only non-trivial solutions of the eigenvalue Schrödinger problem for the bound states region. 
To begin with, we note that every $\psi_n(z) \in L^{2}(\mathbb{R})$. This can be easily verified since the Jacobi polynomials are bounded given that $\tanh(z) \in (-1,1)$ and $a_n$, $b_n$, $b_n \pm a_n > 0$ implies that $e^{-a_n z} \sech^{b_n}{z}$ is square-integrable in $\mathbb{R}$ for every $\varepsilon_n \in (0,v_-)$ which ultimately leads to $\psi_n(z) \in L^2(\mathbb{R})$. 
In addition, this computation explains why the limit state $\varepsilon=v_{-}$ cannot be treated as part of the bound states region, since, by putting $\varepsilon_m=v_{-}$ as a possible bound eigenvalue, its associated $\psi_m(z)$ given by equation \eqref{eigenfunctionsSchroMF} would not be of integrable square in $\mathbb{R}$ as $a_m = a(v_-) = b(v_-) = b_m$.

In any case, given another possible solution, $\psi_{\varepsilon}(z) = \phi(-\tanh{z};\varepsilon)y(-\tanh{z};\varepsilon) \in L^{2}(\mathbb{R})$, it can be verified, by straightforward calculations, that
\begin{equation*}
\int_{-1}^{+1}\left|\sqrt{\rho(u;\varepsilon)}y(u;\varepsilon)\right|^2du = \int_{-\infty}^{+\infty} \left|\psi_{\varepsilon}(z)\right|^2 \sech^2{z}dz \le \int_{-\infty}^{+\infty}\left|\psi_{\varepsilon}(z)\right|^2 dz,
\end{equation*}
meaning square-integrability of $\psi_{\varepsilon}(z)$ implies $\sqrt{\rho(u;\varepsilon)}y(u;\varepsilon) \in L^{2}(-1,1)$.
Therefore, since previous $y(u;\varepsilon)$ would also be a solution of Eq. \eqref{hdeMF}, from the uniqueness of Theorem \ref{eigenTheo}, it follows that the functions given by Eq. \eqref{eigenfunctionsSchroMF} are the only non-trivials solutions of Eq. \eqref{schrEqMFz} belonging to $L^{2}(\mathbb{R})$ and corresponding to bound states.
\end{proof}

Before proceeding with the unbound states regions, the following remark is in order:
\begin{remark}
\label{finitenessOfBoundSolutionsRemark}
In principle, there might be no solutions for region \ref{R1} coming from no $n \in \mathbb{N} \cup \{0\}$ satisfying eigenvalue Eq. \eqref{eigenvalueEqMF}.

In effect, by making use of Eqs. \eqref{sys-abk}, it follows that $a_nb_n= \frac12 v_0 \sinh{2\mu}$ and $b_n^2+k_n=v_0\cosh^2\mu$. For solving this system, we combine last equation with Eq. \eqref{eigenvalueEqMF} which leads to $b_n^2+(2n+1)b_n+n(n+1)-v_0\cosh^2{\mu}=0$.
Therefore, since $b_n>0$, we obtain
\begin{equation} 
b_n=\sqrt{v_0\cosh^2{\mu}+\frac{1}{4}}-\left(n+\frac{1}{2}\right), \quad a_n=\frac{v_0 \sinh{2 \mu} }{2 b_n}.
\label{anbn}
\end{equation}
Now, since $b_n-a_n=\sqrt{v_--\varepsilon_n}>0$ for all $\varepsilon \in (0,v_-)$, then $b_n^2 > a_n\,b_n$, hence from  
Eq. \eqref{anbn} we get the inequality  
\begin{equation}
n < \sqrt{v_0\cosh^2{\mu}+\frac{1}{4}}-\sqrt{\frac{1}{2}v_0\sinh{2\mu}}-\frac{1}{2}=N(\mu,v_0), \quad n \in \mathbb{N} \cup \{0\},  
\label{nCondBounded}
\end{equation}
which allows us to calculate the number of bound solutions, $n_b$, by taking the largest non-negative integer smaller 
than $N(\mu,v_0)$.

Moreover, for all $v_0>0$ and $\mu\geq0$, the condition \eqref{nCondBounded} implies that 
$N(\mu,v_0)>0$, if and only if $v_0> v_{c}=e^{2\mu}\tanh \mu$. That is, under the  condition $v_0>v_c$ there will be at least one bound state whereas, if  $v_0 \le v_c$ there is not any bound state, see Fig.\ \ref{fig:vc}.

\begin{figure}[ht!]\centering
\includegraphics[width=8cm]{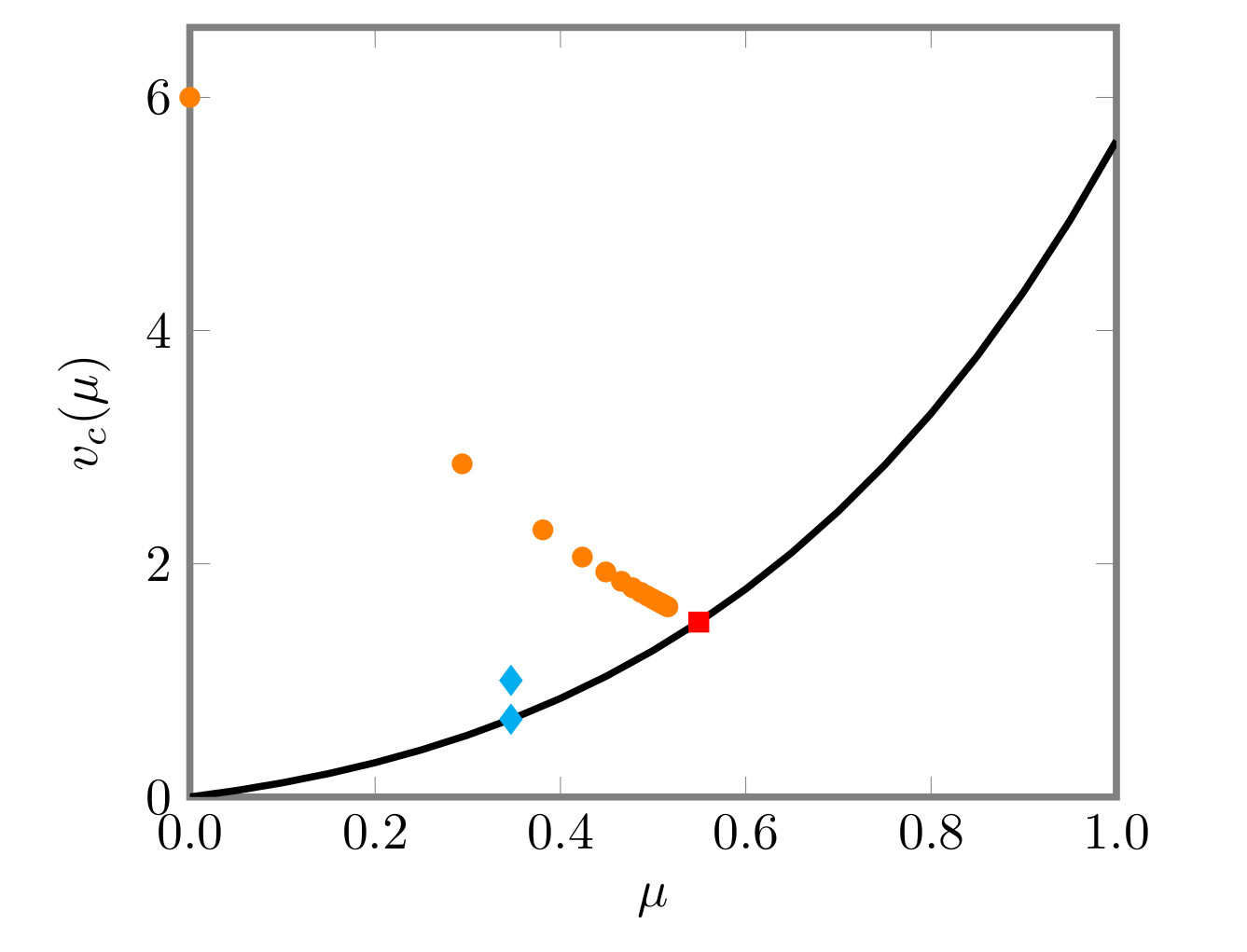}
	\caption{The curve represents the critical amplitude $v_c=e^{2\mu}\tanh \mu$ vs $\mu$. There will be at least one bound state for any point $(\mu,v_0)$ above this curve. Otherwise, there 
is no bound states. It is also shown  the points A$=(\ln(2)/2,1)$, 
B$=(\ln(2)/2,2/3)$ (blue diamonds), the values $(\mu,v_0)$ defined in Eq.\ \eqref{eq:para} (orange circles for $p=1,3,4,5,\dots $), and the red square represents its limit value, corresponding to the examples discussed in Section \ref{sec3}.
} \label{fig:vc}
\end{figure}

In addition, from above, it also follows that there is only a finite number of bound states of the Schrödinger Eq. \eqref{schrEqMFz} as stated on the previous theorem.

Lastly, by using the Eqs. \eqref{varkappa} and \eqref{R1ab}, it follows that 
\begin{equation}
\label{e_n}
\varepsilon_n=v_0e^{-2\mu}-(b_n-a_n)^2,\quad n=0,1,\dots,n_b,
\end{equation}
where $a_n$ and $b_n$ are given in Eq. \eqref{anbn}.

\end{remark}

\subsection{Unbound states regions \ref{R2} and \ref{R3}, $\varepsilon \in (v_-,v_+) \cup (v_+, \infty)$}
For the unbound states region, $\varepsilon \in (v_-,v_+) \cup (v_+,\infty)$, we are dropping the square-integrability condition on the eigenfunctions, $\psi$, and substituting it with a boundedness condition. Therefore, the rest of the solutions for Problem \ref{schroProb} must be found in a different way.  For the sake of simplicity we will write the solutions in this region by using the Gauss's hypergeometric function \eqref{hypergeometricSeries}, so that they can be compared with the ones obtained in Ref. \cite[\S12.3, pages 1651--1653]{morse2:1953}.

To that purpose, we first transform equation \eqref{hdeMF} into a Gauss's HDE \eqref{gaussEq} by setting $u=-1+2s$. Recall we were dealing with the HDE associated to the pair $(\tau_1(u;\varepsilon),\lambda_1(\varepsilon))$ for reasons related to applying Theorem \ref{eigenTheo} which, for these regions, is no longer useful. In any case, for the unbound regions, solving any of the four HDE is analogous as we won't be making use of sophisticated results. In this way, it becomes into
\begin{equation}
\begin{split} s(1-s)y''(s)  + \big[-2(b(\varepsilon)+1)s+  b(\varepsilon)+a(\varepsilon)+1\big]y'(s)
+ (k(\varepsilon)-b(\varepsilon))y(s)=0,
\end{split}
\label{GhdeMF}
\end{equation}
where $s \in (0,1)$, and, as before, $k(\varepsilon)$, $a(\varepsilon)$ and $b(\varepsilon)$ are defined in the region \ref{R2} by
\begin{equation}\label{R2ab}
k(\varepsilon)=\frac{\varepsilon+v_0}{2}-i\frac{\varkappa_+(\varepsilon) |\varkappa_-(\varepsilon)|}{2},\quad
a(\varepsilon)=\frac{\varkappa_+(\varepsilon) - i |\varkappa_-(\varepsilon)|}{2},\quad b(\varepsilon)=\frac{\varkappa_+(\varepsilon) + i |\varkappa_-(\varepsilon)|}{2},
\end{equation}
and in the region \ref{R3} by
\begin{equation}\label{R3ab}
k(\varepsilon)=\frac{\varepsilon+v_0}{2}+\frac{|\varkappa_+(\varepsilon) \varkappa_-(\varepsilon)|}{2},\quad
a(\varepsilon)=i \frac{|\varkappa_+(\varepsilon)| -  |\varkappa_-(\varepsilon)|}{2},\quad b(\varepsilon)=i \frac{|\varkappa_+(\varepsilon)| +  |\varkappa_-(\varepsilon)|}{2},
\end{equation}
while $\varkappa_+(\varepsilon)$ and $\varkappa_-(\varepsilon)$ are still defined by Eq.\ \eqref{varkappa}. 

Now, by comparing the Eq. \eqref{GhdeMF} with Eq. \eqref{gaussEq}, we can identify the parameters of the hypergeometric equation as follows
\begin{equation}
\begin{split}
\alpha(\varepsilon) = b(\varepsilon)+\frac{1}{2}-\sqrt{v_0\cosh^{2}{\mu}+\frac{1}{4}},& \quad \beta(\varepsilon) = b(\varepsilon)+\frac{1}{2}+\sqrt{v_0\cosh^{2}{\mu}+\frac{1}{4}}, \\ 
\gamma(\varepsilon) = a(\varepsilon)&+b(\varepsilon)+1.
\label{abgGausshdeMF}
\end{split}
\end{equation} 
In this way, our possible solutions take the following form in the $s$ variable
\begin{equation}
\psi(s) \propto \phi(-1+2s;\varepsilon)y(s;\varepsilon) =\left(1-s\right)^{\frac{b(\varepsilon)-a(\varepsilon)}{2}}s^{\frac{b(\varepsilon)+a(\varepsilon)}{2}}y(s;\varepsilon)
\label{waveFormGhdeMF}
\end{equation}
where $\phi(u;\varepsilon)$ was introduced in Eq. \eqref{rhophiMF}, and $y(s;\varepsilon)$ solves Eq. \eqref{GhdeMF}. 
 It is important to recall that, in general, there are two linearly independent solutions of Eq.\ \eqref{GhdeMF} and therefore
there will be two solutions of the form \eqref{waveFormGhdeMF}.

\begin{remark}
\label{disclaimerGamma}
In addition, we must point out that for the proofs of the following Theorem \ref{theoUnbound} and Theorem \ref{theoUnbound1}, we shall make use of certain asymptotic relations relating to the behaviour of Gauss's hypergeometric functions at $s=1^{+}$ extracted from \cite[\S15.4(ii)]{olver:2010}. However, the gamma function, $\Gamma(z)$, of different parameters ($\alpha, \, \gamma, \gamma-\alpha, \dots$) are involved in these relations and this may render the stated relation invalid as $\Gamma(z)$ is undefined in $\mathbb{Z}^{-}\cup\{0\}$. In any of the cases, it can be checked that, for every $\varepsilon$, all parameters involved in a employed relation in the proofs of both theorems have either a positive real part or a nonzero imaginary part except, perhaps, for $\varepsilon=v_-$ for which $\alpha(v_-)$ may (and can) be negative. In this way, unless stated otherwise, it must be assumed that there are no problems related to the gamma functions.
\end{remark}

With these ideas in mind, we present the theorem solving the unbound states regions \ref{R2} and \ref{R3}.

\begin{theorem}
\label{theoUnbound}
The non-trivial, unbound solutions (for a proper definition of the normalized unbound solutions leading to the so-called completeness relation, see Section \ref{app-close}) of time-independent Schrödinger equation \eqref{schrEqMFz}, $(\varepsilon, \psi_{\varepsilon}(z))$, for  $\varepsilon \in (v_-,v_+) \cup (v_+,\infty)$	satisfying that $\psi_{\varepsilon}(z) \in L^{\infty}(\mathbb{R})$, that is, the solutions of Problem \ref{schroProb} for the unbound states regions, read 
\begin{equation}
\psi_{\varepsilon}(z) = C \psi_1(z;\varepsilon), \quad C \in \mathbb{C},
\label{solRefStates}
\end{equation}
for region \ref{R2}, this is, for $\varepsilon \in (v_-,v_+)$, and 
\begin{equation}
\psi_{\varepsilon}(z) = C_1\psi_1(z;\varepsilon)+C_2\psi_2(z;\varepsilon), \quad C_1,\,C_2 \in\mathbb{C},
\label{solFreeStates}
\end{equation}
for region \ref{R3}, this is, for $\varepsilon \in (v_+,\infty)$, where $\psi_1(z;\varepsilon)$ and $\psi_2(z;\varepsilon)$ are given, respectively, by
\begin{equation}
\begin{split}
\label{solUnboundEq}
\psi_{1}(z;\varepsilon) = & \frac{\sech^{b(\varepsilon)}{z}}{e^{a(\varepsilon) z}}  F\left(\alpha(\varepsilon),\beta(\varepsilon),\gamma(\varepsilon);\frac{1-\tanh{z}}{2}\right), \\
\psi_{2}(z;\varepsilon) = &  \frac{e^{b(\varepsilon) z}}{\sech^{a(\varepsilon)}{z}}   F\left(\alpha(\varepsilon)-\gamma(\varepsilon)+1,\beta(\varepsilon)-\gamma(\varepsilon)+1,2-\gamma(\varepsilon);\frac{1-\tanh{z}}{2}\right),
\end{split}
\end{equation}
where $F(\alpha,\beta,\gamma;s)$ denotes the Gauss's hypergeometric function (see Proposition \ref{propGaussHypergeometricFunction} for more details),
$a(\varepsilon)$, $b(\varepsilon)$, $\alpha(\varepsilon)$, $\beta(\varepsilon)$, and $\gamma(\varepsilon)$, are the parameters defined for each region in Eqs. 
\eqref{R2ab}--\eqref{R3ab}, and \eqref{abgGausshdeMF}, respectively.
\end{theorem}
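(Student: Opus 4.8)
The plan is to build on the reduction, already carried out above, of equation~\eqref{schrEqMFz} to the Gauss hypergeometric equation~\eqref{GhdeMF} with the parameters~\eqref{abgGausshdeMF}. Near the regular singular point $s=0$ that equation has the two standard local solutions $y_1(s)=F(\alpha,\beta,\gamma;s)$ and $y_2(s)=s^{1-\gamma}F(\alpha-\gamma+1,\beta-\gamma+1,2-\gamma;s)$, and under the substitution~\eqref{waveFormGhdeMF}, $\psi=\phi\,y$, these become exactly $\psi_1(z;\varepsilon)$ and $\psi_2(z;\varepsilon)$ from~\eqref{solUnboundEq}, up to nonzero constants. Since $\gamma=a(\varepsilon)+b(\varepsilon)+1=1+\varkappa_+(\varepsilon)$ always has real part $\ge 1$, it is never a non-positive integer, so $\psi_1$ is well defined throughout both regions; and whenever $\gamma\notin\mathbb{Z}$ the pair $\psi_1,\psi_2$ is linearly independent and hence spans the whole two-dimensional solution space of~\eqref{schrEqMFz}. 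The task then reduces to deciding which combinations $C_1\psi_1+C_2\psi_2$ belong to $L^\infty(\mathbb{R})$, i.e.\ to controlling the behaviour at the two ends $z\to+\infty$ ($s\to0^+$) and $z\to-\infty$ ($s\to1^-$).

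For $z\to+\infty$ I would use $F(\cdot,\cdot,\cdot;0)=1$ together with the identities $1\pm\tanh z=e^{\pm z}\sech z$, which rewrite the prefactor of~\eqref{rhophiMF} as $\phi(-\tanh z;\varepsilon)=e^{-a(\varepsilon)z}\sech^{b(\varepsilon)}z$ with $a+b=\varkappa_+$. A short computation then gives $\psi_1(z;\varepsilon)\sim c_1\,e^{-\varkappa_+ z}$ and $\psi_2(z;\varepsilon)\sim c_2\,e^{+\varkappa_+ z}$ as $z\to+\infty$, with explicit nonzero constants $c_1,c_2$. In region~\ref{R2} we have $\varkappa_+=\sqrt{v_+-\varepsilon}>0$ real, so $\psi_1$ decays while $\psi_2$ blows up at $+\infty$; in region~\ref{R3}, $\varkappa_+$ is purely imaginary and both factors have modulus tending to a constant, so both solutions stay bounded at $+\infty$.

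For $z\to-\infty$ the main tool is the connection formula for the hypergeometric function at $s=1$ (the relations of~\cite[\S15.8]{olver:2010} underlying Remark~\ref{disclaimerGamma}). Here $\gamma-\alpha-\beta=a-b=-\varkappa_-$ is purely imaginary and nonzero throughout the open regions~\ref{R2} and~\ref{R3}, so that formula expresses $F(\alpha,\beta,\gamma;s)$ as a bounded combination of a term regular at $s=1$ and a term proportional to $(1-s)^{-\varkappa_-}$, which has modulus~$1$ on $(0,1)$; the $F$ entering $\psi_2$ behaves in the same way, since its ``$\gamma-\alpha-\beta$'' again equals $-\varkappa_-$. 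As $z\to-\infty$ one has $1-s=(1+\tanh z)/2\to0^+$ and $1-\tanh z\to2$, so $\phi$ has modulus tending to $2^{\varkappa_+/2}$ in region~\ref{R2} (bounded, $\varkappa_+$ being real) and to~$1$ in region~\ref{R3}. Hence $\psi_1$ and $\psi_2$ remain bounded as $z\to-\infty$ as well; being real-analytic on $\mathbb{R}$ they are bounded on compacta, so the only obstruction to $L^\infty$-membership is the one detected at $+\infty$.

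Putting the pieces together: in region~\ref{R2}, $\psi_1\in L^\infty(\mathbb{R})$ while $\psi_2\notin L^\infty(\mathbb{R})$, and for $C_2\neq0$ we get $C_1\psi_1+C_2\psi_2\sim C_2c_2\,e^{\varkappa_+ z}$ as $z\to+\infty$ --- the decaying $\psi_1$ cannot absorb this growth --- so the bounded solutions are precisely the multiples of $\psi_1$, which is~\eqref{solRefStates}. In region~\ref{R3}, $\psi_1$ and $\psi_2$ are bounded and, since $\gamma=1+\varkappa_+\notin\mathbb{Z}$ there, linearly independent; they therefore span the entire solution space, every solution is bounded, and the solution set is~\eqref{solFreeStates}. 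The step I expect to be most delicate is the $s\to1^-$ analysis: one must check that no Gamma-function factor appearing in the connection formula hits a pole --- exactly what Remark~\ref{disclaimerGamma} guarantees, the only risky value $\varepsilon=v_-$ being an excluded endpoint handled separately --- and, in region~\ref{R2}, that the leading coefficient $c_2$ of the growing solution is genuinely nonzero, so that no choice of $C_1$ removes the blow-up. Finally, the degenerate case $\gamma\in\mathbb{Z}$ (which can occur only in~\ref{R2}, when $\varkappa_+$ is a positive integer) needs a separate word: there $\psi_1$ is still a well-defined bounded solution, while the second, independent solution still grows like $e^{\varkappa_+ z}$ up to a polynomial or logarithmic factor and is thus still unbounded, leaving the conclusion unchanged.
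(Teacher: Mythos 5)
Your proposal is correct and follows essentially the same route as the paper: reduce to the Gauss HDE \eqref{GhdeMF}, take the two Frobenius solutions at $s=0$, and decide boundedness of $\psi_1,\psi_2$ separately at $s\to 0^+$ ($z\to+\infty$) and $s\to 1^-$ ($z\to-\infty$), using that $\gamma-\alpha-\beta$ is purely imaginary and nonzero at the left end and that $\varkappa_+$ is real positive in \ref{R2} but purely imaginary in \ref{R3} at the right end. The one place the paper does noticeably more work is the degenerate case $\gamma(\varepsilon_m)=m+1\in\mathbb{Z}$ in \ref{R2}, where it identifies the second independent solution explicitly as $F(\alpha,\beta,\alpha+\beta-m;1-s)$ and invokes relation 15.4.23 of \cite{olver:2010} to confirm the $s^{-m}$ blow-up with a genuinely nonzero coefficient (no Gamma poles, since $\alpha,\beta$ have nonzero imaginary part) --- precisely the verification you defer to ``a separate word.''
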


\begin{proof}
For region \ref{R3}, this is, whenever $\varepsilon \in (v_+,\infty)$, $\gamma(\varepsilon)=1+i|\varkappa_{+}(\varepsilon)| \notin \mathbb{R}$ which, accordingly to \cite[Table 13, \S21.4, page 281]{nu:1988} and given that
$s\in(0,1)$, implies that the functions 
\begin{equation} 
\begin{split}
\label{u1u2}
y_1(s;\varepsilon) & = F(\alpha(\varepsilon),\beta(\varepsilon),\gamma(\varepsilon);s)=w_1(s;\varepsilon), \\
y_2(s;\varepsilon) & =  s^{1-\gamma(\varepsilon)}F(1+\alpha(\varepsilon)-\gamma(\varepsilon),1+\beta(\varepsilon)-\gamma(\varepsilon),2-\gamma(\varepsilon);s)=s^{1-\gamma(\varepsilon)}w_2(s;\varepsilon),
\end{split}
\end{equation}
are two linearly independent solutions of the Eq. \eqref{GhdeMF}. Also, both $w_i(s;\varepsilon)$ are bounded whenever $\varepsilon \in (v_-,\infty)$, which includes region \ref{R3}, given that $$\gamma(\varepsilon)-\alpha(\varepsilon)-\beta(\varepsilon)=a(\varepsilon)-b(\varepsilon) = -i\left|\varkappa_-(\varepsilon)\right| \neq 0$$ in addition to relation 15.4.22 in  \cite[\S15.4(ii)]{olver:2010}. Thus, the functions 
\begin{equation}
\begin{split}
\psi_1(z;\varepsilon) & = \left(1-s(z)\right)^{i\frac{\left|\varkappa_-(\varepsilon)\right|}{2}}s(z)^{i\frac{\left|\varkappa_+(\varepsilon)\right|}{2}} w_1(s(z);\varepsilon), \\
\psi_2(z;\varepsilon) & = \left(1-s(z)\right)^{i\frac{\left|\varkappa_-(\varepsilon)\right|}{2}}s(z)^{-i\frac{\left|\varkappa_+(\varepsilon)\right|}{2}}w_2(s(z);\varepsilon),
\label{posWaveFunctionMFs}
\end{split}
\end{equation}
where $s(z)=(1-\tanh z)/2$, are two linearly independent and bounded solutions of the Eq. \eqref{schrEqMFz} for region \ref{R3}. This justifies the solution \eqref{solFreeStates}.

Now, for the region \ref{R2}, this is, whenever $\varepsilon \in (v_-,v_+)$, $\gamma(\varepsilon)=\sqrt{v_+-\varepsilon}+1 \in \mathbb{R}^{+}$ and it could be a positive integer greater than one. Therefore, the second solution, $y_2(s;\varepsilon)$, in \eqref{u1u2} could be undefined as $2-\gamma(\varepsilon)$ could be a negative integer (see Proposition \ref{propGaussHypergeometricFunction}). Actually, this is the case whenever $ \varepsilon = v_{+}-m^2 \eqqcolon \varepsilon_m$ and $m = 1,2,\ldots,q$, being $q$ the greatest positive integer satisfying the condition $1 \le q <\sqrt{v_+-v_-}$. For the cases $\varepsilon=\varepsilon_m$, we will need to find another linearly independent solution, $y_2(s;\varepsilon)$.

Let us first deal with $\varepsilon\in(v_-,v_+)$ and $\varepsilon\neq \varepsilon_m$, that is,
whenever $\gamma(\varepsilon)$ is not a positive integer. In this case, the functions given in Eq. \eqref{posWaveFunctionMFs} with its parameters particularized to region \ref{R2} are also linearly independent solutions of the Eq. \eqref{schrEqMFz}. In particular, for the region \ref{R2}, they are written in the form
\begin{equation}
\begin{split}
\psi_1(z;\varepsilon) & = \left(1-s(z)\right)^{i\frac{\left|\varkappa_-(\varepsilon)\right|}{2}}s(z)^{\frac{\varkappa_+(\varepsilon)}{2}}w_1(s(z);\varepsilon), \\
\psi_2(z;\varepsilon) & = \left(1-s(z)\right)^{i\frac{\left|\varkappa_-(\varepsilon)\right|}{2}}s(z)^{-\frac{\varkappa_+(\varepsilon)}{2}}w_2(s(z);\varepsilon),
\label{posWaveFunctionR2}
\end{split}
\end{equation}
meaning $\psi_1(s;\varepsilon)$ is bounded and $\psi_2(s;\varepsilon)$ is unbounded at $s = 0^{+}$ since, recall, both $w_i(s;\varepsilon)$ were bounded for every $\varepsilon \in (v_-,\infty)$. Therefore, for the region \ref{R2}, except, perhaps, for $\varepsilon=\varepsilon_m$, 
there is only one unbound solution of Eq. \eqref{schrEqMFz} which is, precisely, $\psi_{1}(s;\varepsilon)$. This partially justifies \eqref{solRefStates}. Let us show that this is still the case for $\varepsilon=\varepsilon_m$.

In this way, we begin by recalling that the $\varepsilon_m \in (v_-,v_+)$ for $m=1,2,\ldots,q$ satisfy that  $\gamma(\varepsilon_m)=m+1$
is a positive integer. In this case, we insist that $\psi_1(z;\varepsilon)$ given in Eq. \eqref{posWaveFunctionR2} is still a possible solution of 
Eq. \eqref{schrEqMFz} (it always is) which is bounded for the very same reasons as before. Therefore, we only need to find a second linearly independent solution and show its unbounded character. 

In this regard, accordingly to \cite[\S21.4, page 277]{nu:1988}, since, $\gamma(\varepsilon_m)$ is a positive integer,
there are two possibilities for a fixed $m$:
\begin{itemize}
\item[A.] $\alpha(\varepsilon_m)$,  $\beta(\varepsilon_m)$, and $\alpha(\varepsilon_m)+\beta(\varepsilon_m)$ are not integers, or
\item[B.] one of them is an integer.
\end{itemize}
From Eqs. \eqref{R2ab} and taking into account that  $\varepsilon_m\not =v_-$, 
it follows that $\Im(\alpha(\varepsilon_m))\neq0$, $\Im(\beta(\varepsilon_m))\neq0$, and $\Im(\alpha(\varepsilon_m)+\beta(\varepsilon_m))\neq0$ for every $m=1,2,\dots,q$, meaning we deal with the case A. Therefore, the second solution of Eq. \eqref{GhdeMF} is given by $y_A(s;\varepsilon_m)=F(\alpha(\varepsilon_m),\beta(\varepsilon_m),\alpha(\varepsilon_m)+\beta(\varepsilon_m)-m;1-s)$ (see \cite[\S21.4, page 277]{nu:1988}). 
Thus, a second linearly independent solution of Eq. \eqref{schrEqMFz}
is given by
$$
\psi_A(s;\varepsilon_m)=\left(1-s\right)^{i\frac{\left|\varkappa_-(\varepsilon_m)\right|}{2}}s^{\frac{\varkappa_+(\varepsilon_m)}{2}} y_A(s;\varepsilon_m).
$$
However, accordingly to relation 15.4.23 in  \cite[\S15.4(ii)]{olver:2010},
$$
F(\alpha,\beta,\alpha+\beta-m;1-s) \sim \frac{\Gamma(\alpha+\beta-m)\Gamma(m)}{\Gamma(\alpha)\Gamma(\beta)} s^{-m} \text{ as } s \to 0^{+},
$$
from where it follows that $\psi_A(s,\varepsilon_m)$ is unbounded at $s=0^{+}$ since $\varkappa_+(\varepsilon_m)=\sqrt{v_+-\varepsilon_m}=m$. 
Thus, we have only one bounded solution. This completes the proof.
\end{proof}

\begin{remark} \label{complexConjugateRemark}
Notice that, for the region \ref{R3}, by making use of the values in Eq.\ \eqref{R3ab} as well as the identity $F(\alpha,\beta, \gamma ;z)=(1-z)^{\gamma-\alpha-\beta}\,F(\gamma-\alpha,\gamma-\beta,\gamma;z)$ (see \cite[Eq. 15.8.1 \S15.8(i)]{olver:2010}) 
it can be shown that the unbound solutions in Eq.\ \eqref{solUnboundEq}  satisfy the relation $\psi_1(z;\varepsilon)=\psi_2^\star(z;\varepsilon)$, where 
${z^\star}$ denotes the complex conjugate of $z$. 
\end{remark}

Now, before proceeding with the particularization to the symmetric case, we shall deal with the special states $\varepsilon = 0, v_{\pm}$.

 \subsection{The special states, $\varepsilon = 0,\, v_{\pm}$}
 
Let us now investigate the cases $\varepsilon=0$, and $\varepsilon=\,v_{\pm}$. 
To begin with, let us point out that if $\varepsilon=0$ were a possible energy state, then it would necessarily need to be associated to a bound solution, meaning that it should be a possible solution of the eigenvalue Eq. \eqref{eigenvalueEqMF}. However, setting $\varepsilon=0$ in Eq. \eqref{R1ab} yields $k(0)=0$ and $b(0)>0$ which ultimately implies that Eq. \eqref{eigenvalueEqMF} has no solutions for that case. Therefore, in the end, the energy for the system described by Eq. \eqref{schrEqMFz} is always positive as is to be expected.

Regarding, cases $\varepsilon = v_{\pm}$, we shall treat them both as part of the unbound states regions. We already justified this choice for $\varepsilon=v_-$ in the proof of Theorem \ref{theoBounded} and it is somewhat obvious for $\varepsilon=v_+$. In this fashion, we obtain the following result that completes previous Theorem \ref{theoUnbound}.

\begin{theorem} \label{theoUnbound1}
Concerning the special states $\varepsilon = v_{\pm}$ and the continuous spectrum, $\varSigma$, of Problem \ref{schroProb}, the following statements are satisfied:
\begin{enumerate}[label=\text{(\arabic*)}]
\item $v_+ \in \varSigma$ and its unbound solution is written in the form
\begin{equation} \label{psivplus}
\psi_{v_+}(z) \propto \psi_1(z;v_+),
\end{equation}
where $\psi_1(z;\varepsilon)$ is defined in Eq.  \eqref{solUnboundEq}.
\item $v_- \in \varSigma$ if and only if $\alpha(v_-) = -l$ or, equivalently, $N(\mu,v_0)=l$ for some $l = 0,1,2,\dots$, where $N(\mu,v_0)$ is given by Eq.\ \eqref{nCondBounded}. In that case, its unbound solution, $\psi_{v_-}(z)$, reduces to
\begin{equation}
\label{particularCasev-state}
    \psi_{v_-}(z) \propto e^{-a(v_-)z} \sech^{a(v_-)z}(z) P_{l}^{(0,2a(v_-))}\left(-\tanh{z}\right),
\end{equation}
being $a(v_-) = \frac12\sqrt{v_+-v_-}$.
\end{enumerate}
\end{theorem}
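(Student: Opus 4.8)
The plan is to treat the two cases $\varepsilon=v_+$ and $\varepsilon=v_-$ separately, in each case going back to the reduced hypergeometric equation \eqref{GhdeMF} with the appropriate specialization of the parameters and using the asymptotic relations of \cite[\S15.4(ii)]{olver:2010} to check boundedness at the two endpoints $z\to\pm\infty$ (equivalently $s=0^+$ and $s=1^-$).

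For statement (1), I would substitute $\varepsilon=v_+$ into \eqref{varkappa}--\eqref{abgGausshdeMF}: here $\varkappa_+(v_+)=0$ and $\varkappa_-(v_+)=i\sqrt{v_+-v_-}\neq 0$, so $a(v_+)=b(v_+)=\tfrac{i}{2}\sqrt{v_+-v_-}$, hence $\gamma(v_+)=1+i\sqrt{v_+-v_-}\notin\mathbb{R}$ and in particular $\gamma(v_+)$ is not a nonpositive integer. By the same argument as in Theorem \ref{theoUnbound} for region \ref{R3}, $\psi_1(z;v_+)$ and $\psi_2(z;v_+)$ given by \eqref{solUnboundEq} are two linearly independent solutions of \eqref{schrEqMFz}. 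The point is that now the factor $s(z)^{(b(v_+)+a(v_+))/2}=s(z)^{i\sqrt{v_+-v_-}/2}$ has modulus one as $s\to 0^+$, so $\psi_1(z;v_+)$ is bounded at $z\to+\infty$; boundedness at $z\to-\infty$ (i.e. $s\to1^-$) follows as before from $\gamma-\alpha-\beta=a-b=0$ together with relation 15.4.21 of \cite{olver:2010} (the logarithmic case), which gives at worst a $\log$ growth — here one must check that the $z$ variable kills even this, since $1-s(z)=\tfrac{1+\tanh z}{2}\sim e^{2z}$ and $\log(1-s)\sim 2z$ is cancelled against the algebraic decay; alternatively note $\alpha(v_+),\beta(v_+)$ are such that no logarithmic term actually appears, or invoke Remark \ref{disclaimerGamma}. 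In any event $\psi_2(z;v_+)$, carrying the factor $s^{-i\sqrt{v_+-v_-}/2}$, also has modulus-one oscillation at $s=0^+$; the subtlety is whether the second solution is genuinely bounded or not. I expect the resolution to be that $v_+$ behaves like the edge of region \ref{R2} rather than like an interior point of region \ref{R3}: because $\varkappa_-(v_+)$ is purely imaginary while $\varkappa_+(v_+)=0$, one of the two endpoint exponents collapses, and the correct count is a single bounded solution, namely $\psi_1(z;v_+)$ — which is exactly \eqref{psivplus}. This matches the fact that $v_+$ sits on the boundary between a one-dimensional and a two-dimensional solution space, so the continuous spectrum is nondegenerate there.

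For statement (2), I would set $\varepsilon=v_-$, so $\varkappa_-(v_-)=0$ and $\varkappa_+(v_-)=\sqrt{v_+-v_-}$, giving $a(v_-)=b(v_-)=\tfrac12\sqrt{v_+-v_-}$, and from \eqref{abgGausshdeMF}, $\gamma(v_-)=1+\sqrt{v_+-v_-}$ and $\alpha(v_-)=\tfrac12\sqrt{v_+-v_-}+\tfrac12-\sqrt{v_0\cosh^2\mu+\tfrac14}$. Comparing with \eqref{nCondBounded} one checks directly that $\alpha(v_-)=-N(\mu,v_0)$, so $\alpha(v_-)=-l$ for a nonnegative integer $l$ precisely when $N(\mu,v_0)=l\in\mathbb{N}\cup\{0\}$; this is the stated characterization. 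For the "only if" direction I would argue that $\psi_1(z;v_-)=\phi(-\tanh z;v_-)\,F(\alpha,\beta,\gamma;s(z))$ has the prefactor $(1-s)^{(b-a)/2}s^{(b+a)/2}=s^{\sqrt{v_+-v_-}/2}\cdot(1-s)^{0}$, so as $z\to-\infty$, i.e. $s\to1^-$, with $\gamma-\alpha-\beta=a-b=0$ the hypergeometric factor grows like a constant or like $\log(1-s)\sim 2z$ unless the series terminates; and the series $F(\alpha,\beta,\gamma;\cdot)$ terminates into a polynomial exactly when $\alpha$ (or $\beta$, but $\beta>0$) is a nonpositive integer. If it does not terminate, one shows $\psi_1(z;v_-)$ is unbounded at $-\infty$ and, checking the second solution too (which behaves like $s^{-\sqrt{v_+-v_-}/2}$ and is unbounded at $s=0^+$, i.e. $z\to+\infty$, exactly as in region \ref{R2}), conclude no bounded solution exists, so $v_-\notin\varSigma$.

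When $\alpha(v_-)=-l$, the terminating hypergeometric series is a Jacobi polynomial: using $F(-l,\beta,\gamma;s)\propto P_l^{(\gamma-1,\beta-\gamma)}(1-2s)$ with the present parameters $\gamma-1=\sqrt{v_+-v_-}=2a(v_-)$ and $\beta-\gamma=\alpha-\gamma+\ldots$ — a short check gives $\beta-\gamma=b(v_-)-a(v_-)=0$ — one gets, after the change back to $z$ via $s(z)=(1-\tanh z)/2$ so that $1-2s=\tanh z$ and $-(1-2s)=-\tanh z$, exactly the expression \eqref{particularCasev-state} with $a(v_-)=\tfrac12\sqrt{v_+-v_-}$; the prefactor $\phi(-\tanh z;v_-)=(1-u)^{0}(1+u)^{a(v_-)}$ with $u=-\tanh z$ contributes $(1-\tanh z)^{a(v_-)}$, matching $e^{-a(v_-)z}\sech^{a(v_-)}(z)$ up to a constant. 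Finally one verifies this candidate is genuinely bounded: the polynomial is bounded on $(-1,1)$, $e^{-a(v_-)z}\sech^{a(v_-)}(z)\to 0$ as $z\to+\infty$ and $\to$ const as $z\to-\infty$ (since here $a(v_-)=b(v_-)$, the two exponential rates balance), so $\psi_{v_-}\in L^\infty(\mathbb{R})$ and hence $v_-\in\varSigma$. The main obstacle I anticipate is the careful endpoint bookkeeping: the logarithmic case $\gamma-\alpha-\beta=0$ of the connection formulas must be handled honestly (either by showing the logarithm's coefficient vanishes when the series terminates, or by exhibiting the $\log$-solution explicitly and confirming it is the unbounded one), and one must be sure that the change of variables from $s$ back to $z$ does not reinstate unboundedness through the $\log(1-s)\sim 2z$ factor — this is where Remark \ref{disclaimerGamma} and the precise vanishing $\beta-\gamma=0$ do the real work.
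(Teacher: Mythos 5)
Your overall strategy (specialize the parameters at $\varepsilon=v_\pm$, then check boundedness at $s=0^+$ and $s=1^-$ via the connection formulas of \cite[\S15.4(ii)]{olver:2010}) is the same as the paper's, but part (1) contains a computational error that derails the argument. Since $a(\varepsilon)+b(\varepsilon)=\varkappa_+(\varepsilon)$ in the region-\ref{R2} parametrization, at $\varepsilon=v_+$ one has $a(v_+)=-b(v_+)=-\tfrac{i}{2}\sqrt{v_+-v_-}$ and hence $\gamma(v_+)=1$, a positive integer — not $1+i\sqrt{v_+-v_-}$ as you claim. Consequently the two standard solutions $y_1=F(\alpha,\beta,\gamma;s)$ and $y_2=s^{1-\gamma}F(\dots,2-\gamma;s)$ degenerate, and the actual content of the proof is to produce a genuinely independent second solution — the paper takes $F(\alpha,\beta,\alpha+\beta;1-s)$ (case A of \cite[\S21.4]{nu:1988}, applicable because $\alpha(v_+)$, $\beta(v_+)$, $\alpha(v_+)+\beta(v_+)$ are non-real) — and to show via relation 15.4.21 that it diverges like $\ln s$ as $s\to0^+$. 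You never do this: you acknowledge "the subtlety is whether the second solution is genuinely bounded or not" and then state what you "expect the resolution to be." That is the missing step, not a side remark. You also misplace the logarithmic case: at $v_+$ one has $\gamma-\alpha-\beta=a-b=-i\sqrt{v_+-v_-}\neq 0$, so boundedness of $\psi_1$ at $s\to1^-$ follows directly from relation 15.4.22 with no logarithm to cancel; the case $\gamma=\alpha+\beta$ arises at $v_-$ (at $s\to1^-$) and at $v_+$ only for the \emph{second} solution at $s\to0^+$.

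Part (2) is much closer to the paper. The identification $\alpha(v_-)=-N(\mu,v_0)$, the observation that the prefactor $(1-s)^{(b-a)/2}$ is trivial so the $\ln(1-s)\sim 2z$ growth of $F$ is not suppressed unless the series terminates, and the reduction of the terminating series to $P_l^{(\varkappa_+(v_-),0)}$ are all as in the paper (though note $\beta(v_-)-\gamma(v_-)=l$, not $0$; the correct second Jacobi parameter is $\beta-\gamma-l=0$, so your two slips cancel). The remaining gap is the second solution: when $\gamma(v_-)=1+\sqrt{v_+-v_-}$ is a positive integer the representation $s^{1-\gamma}F(\dots,2-\gamma;s)$ is unavailable, and the paper must split into the cases C.1/C.2 of \cite[Table 13, \S21.4]{nu:1988}, exhibiting the function $\Phi$ and verifying its divergence at $s\to0^+$ term by term. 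You only treat the generic (case D) behavior $s^{-\varkappa_+(v_-)/2}$, so the claim that \emph{every} solution independent of $\psi_1$ is unbounded is not established in the integer case.
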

\begin{proof}
We reason in a similar fashion as in previous theorem. We start by recalling that $\psi_1(z;\varepsilon)$ given in Eq. \eqref{solUnboundEq} is a possible solution to Eq. 
\eqref{schrEqMFz} for both cases $\varepsilon = v_{\pm}$. All that is left is to find a second linearly independent solution and justify the boundedness or unboundedness of both solutions.

We begin dealing with the case $\varepsilon=v_+$. In this case, $\varkappa_{+}(v_+)=0$, $\varkappa_{-}(v_+)=i\sqrt{v_+-v_-}\neq0$, and $\gamma(v_+)=1$. Therefore, the first solution 
of Eq. \eqref{schrEqMFz} is given by $$\psi_1(s;v_+)=\left(1-s\right)^{i\frac{|\varkappa_-(v_+)|}{2}}F(\alpha(v_+),\beta(v_+),1;s),$$ which is bounded thanks to relation 15.4.22 in  \cite[\S15.4(ii)]{olver:2010}. 
In this case, since $\alpha(v_+)$,  $\beta(v_+)$, and $\alpha(v_+)+\beta(v_+)$ are not integers and $\gamma(v_-)$ is an integer, we are dealing with the case A described in the proof of Theorem \ref{theoUnbound}
and accordingly to \cite[page 277]{nu:1988},  the second solution is given by
$$
\psi_{A}(s;v_+)=\left(1-s\right)^{i\frac{|\varkappa_-(v_+)|}{2}}F(\alpha(v_+),\beta(v_+),\alpha(v_+)+\beta(v_+);1-s).
$$
Accordingly to relation 15.4.21 in  \cite[\S15.4(ii)]{olver:2010},
$$
F(\alpha,\beta,\alpha+\beta;1-s) \sim -\frac{\Gamma(\alpha+\beta)}{\Gamma(\alpha)\Gamma(\beta)} \ln(s) \text{ as } s \to 0^{+},
$$
meaning it is unbounded at $s=0^+$ and, therefore, there is only one eigenfunction corresponding to the eigenenergy $\varepsilon=v_+$ which is $\psi_1(s;v_+)$. 
In this way, we include this case in the region \ref{R2}. 

Let us finally consider $\varepsilon=v_-$. As we expressed earlier, $\psi_1(z;\varepsilon)$ is always a possible unbound solution for every $\varepsilon \in [v_-,\infty)$ and, in this case, it reads $$\psi_1(z;v_-) = s(z)^{\frac{\varkappa_+(v_-)}{2}}F(\alpha(v_-),\beta(v_-),\gamma(v_-);s(z)),$$ where $s(z)=(1-\tanh z)/2$. Now, using the values in Eq.\ \eqref{abgGausshdeMF}, it can be checked that $\alpha(v_-)+\beta(v_-)=\gamma(v_-)$ which coupled to relation 15.4.21 in  \cite[\S15.4(ii)]{olver:2010}, means that
\begin{equation}
\label{u1v-}
y_{1}(s;v_-)= F(\alpha(v_-),\beta(v_-),\gamma(v_-);s) \sim 
-\frac{\Gamma(\gamma(v_-))}{\Gamma(\alpha(v_-))\Gamma(\beta(v_-))}\ln(1-s), \text{ as } s \to 1^{-}.
\end{equation}
However, for the special case $\varepsilon=v_-$, the previous relation could be invalid (see Remark \ref{disclaimerGamma}) as $\alpha(v_-)$ could be a non-positive integer, say $\alpha(v_-)=-l$ for some $l=0,1,2,\dots$, which would result in an undefined $\Gamma(\alpha(v_-))$ and rendering Eq. \eqref{u1v-} incorrect.
Therefore, the solution $\psi_1(z;v_-)$ given in Eq. \eqref{solUnboundEq} will be bounded if and only if the series $F(\alpha(v_-),\beta(v_-),\gamma(v_-);s)$ 
is terminating or, equivalently, $\alpha(v_-)=-l$, for some $l=0,1,2,\dots$, otherwise it will unbounded  by \eqref{u1v-}. Also, it can be shown that there exists pairs of values $(v_0,\mu)$ for which $\alpha(v_-)=-l$ for some $l=0,1,2,\dots$ See Remark \ref{existenceOfv-state} for further explanations. 

Now, we shall find a second linearly independent solution and prove that it always is an unbounded function. Therefore, according to \cite[Table 13, \S21.4, page 281]{nu:1988} and, analogously to what has been before, there are two possibilities regarding $\gamma(v_-)$: \begin{itemize}
\item[C.] $\gamma(v_-) = 1 + \varkappa_+(v_-) =1 + m$ for some $m=1,2,\dots$, or
\item[D.] neither $\gamma(v_-)$ nor $\varkappa_{+}(v_-)$ are an integer.
\end{itemize}

Since, in principle, we know nothing of $v_+-v_-$, we need to exhaust both options. We start dealing with the case D in which $\varkappa_+(v_-) \notin \mathbb{Z}$. In this case, $\gamma(v_-)$ is not a positive integer, the two independent solutions of Eq. \eqref{schrEqMFz} are given again by  Eq. 
\eqref{posWaveFunctionR2} and it can be easily checked that $\psi_2(s;v_-)$ is unbounded at $s \to 0^+$ due to the factor $s^{-\frac12\varkappa_+(v_-)}$.

Next, we deal with the case C in which $\varkappa_+(v_-)=\sqrt{v_+-v_-}= m = 1,2,\dots $ This implies that $\gamma(v_-)= 1+m$ and, as it is shown in \cite[Table 13, \S21.4, page 281]{nu:1988}, there are another two possibilities regarding the two new parameters $\alpha'=\alpha-\gamma+1$ and $\beta'=\beta-\gamma+1$. These possibilities are: \begin{itemize}
\item[C.1] neither $\alpha'(v_-)$ nor $\beta'(v_-)$ is a equal to some $k=0,1,\dots,m-1$, or
\item[C.2] $\alpha'(v_-)=k$ or $\beta'(v_-)=k$ for some $k=0,1,\dots,m-1$,
\end{itemize}
where, recall that, we are assuming $\varkappa_+(v_-)=m$.
For case C.2, two independent solutions of Eq.\ \eqref{schrEqMFz} are given, once again, by  Eq.\ \eqref{posWaveFunctionR2} meaning $\psi_1(s;v_-)$ is the only possibly bounded solution.

Lastly, we deal with the case C.1. According to \cite[Table 13, \S21.4, page 281]{nu:1988}, for this specific situation, apart from $y_1(s;\varepsilon)$ given in \eqref{u1u2}, another linearly independent solution of \eqref{GhdeMF}, $y_2(s;\varepsilon)$ is the function $\Phi(\alpha(\varepsilon),\beta(\varepsilon),\gamma(\varepsilon),s)$ defined in \cite[Eq. (32), page 279]{nu:1988}. By making use of the explicit expression of $\Phi$, it can be checked that the function $$\psi_{C.1}(s;v_-)=s^{\frac12\varkappa_+(v_-)}\Phi(\alpha(v_-),\beta(v_-),\varkappa_+(v_-)+1,s),$$ which is linearly independent to $\psi_1(s;v_-)$, is unbounded at $s \to 0^{+}$, as one of its terms satisfies that, $m=\varkappa_{+}(v_-)$,
$$
\lim_{s \to 0^+}\sum^{m}_{k=1}\dfrac{\left(-1\right)^{k-1}\left(k-1\right)!}{\left(n-k\right)_{k}\left(\alpha-k\right)_{k}\left(\beta-k\right)_{k}}s^{\frac{m}{2}-k}=\infty,
$$
and the rest are bounded. In such fashion, we have exhausted all options and found that every linearly independent solution to $\psi_{1}(s;\varepsilon)$ is unbounded.

To conclude the proof, we explicitly compute solution $\psi_1(s;v_-)$ whenever $\alpha(v_-)=-l$ for some $l=0,1,2,\dots$  
From \eqref{posWaveFunctionR2},
and by using \eqref{jac-hyp}, we find that
$$
\psi_1(s,v_-)= s^{\frac{\varkappa_+(v_-)}2} F(-l,\beta(v_-),\alpha(v_-)+\beta(v_-);s)= s^{\frac{\varkappa_+(v_-)}2} P_{l}^{(\varkappa_+(v_-),0)}\left(1-2s\right),
$$
where $P_l^{(\varkappa_+(v_-),0)}$ denotes the Jacobi polynomials of degree $l$. Thus, by making use of the symmetric property of Jacobi polynomials (see first formula in \cite[page 41]{nu:1988}), the above expression becomes into \eqref{particularCasev-state}. It can be checked that this solution is never of integrable square, although its form coincides with the one predicted for bound solutions given by Eq. \eqref{eigenfunctionsSchroMF} in Theorem \ref{theoBounded}. 

Also, straightforward calculations show that $\alpha(v_-)=-N(\mu,v_0)$, meaning condition \eqref{nCondBounded} for existence on bound solutions could be relaxed by substituting ``$<$'' for ``$\le$'' and including the existence of $\varepsilon=v_-$ as an eigenenergy in the equality $l=N(\mu,v_0)$ for some $l \in \mathbb{N} \cup \{0\}$. Regardless these facts, we still consider $v_-$ as part of the unbound states region \ref{R2}. Thus, the theorem is proved.
\end{proof}

Before proceeding for the particularization for the symmetric case $\mu=0$, the following remarks are in order.

\begin{remark}
\label{existenceOfv-state}
We show that there always exists pairs of values $(v_0,\mu)$ for which $\alpha(v_-)=-l$ for some $l=0,1,2,\dots$ In order to do so, 
we rewrite the equation $\alpha(v_-)=-N(\mu,v_0)=-l$ as $g(\mu,v_0)-(2l+1)=0$, where
$$
g(\mu,v_0)= \sqrt{4v_0\cosh^2{\mu}+1}-\sqrt{{2}v_0\sinh{2\mu}}.
$$
By straightforward calculations, it follows that, for every $v_0>0$, $g(v_0,\mu)$ is a decreasing
function of $\mu$ and satisfies that $\lim_{\mu\to0+}g(\mu,v_0)=\sqrt{4v_0+1}$ and 
$\lim_{\mu\to+\infty}g(\mu,v_0)=0$. Therefore, for every $v_0>0$, there is always a value of $\mu$ for which $g(\mu,v_0)=1$. In fact, let $m$ be the greatest non-negative integer such that $v_0>m(m+1)$, then, there exist $\mu_l$ for which $g(\mu_l,v_0)=2l+1$, for every $l=0,1,\dots,m$. Therefore, for these pairs $(\mu_l,v_0)$ with $l=0,1,2, \dots,m$, $\varepsilon=v_{-}$ 
belongs to the continuous spectrum of Problem \ref{schroProb}. 
\end{remark}

\begin{remark}
From theorems \ref{theoUnbound} and \ref{theoUnbound1}, it follows that the continuous spectrum of the equation \eqref{schrEqMFz}, $\varSigma \subset [v_-, \infty)$, always satisfies that 
$(v_-,\infty) \subset \varSigma$ and reaches the equality $\varSigma = [v_-, \infty)$ whenever $\alpha(v_-)=-l=-N(\mu,v_0)$ for some $l=0,1,2,\dots$
\end{remark}

\subsection{The symmetric case, $\mu = 0$}
Now, we shall particularize above theorems for the symmetric case $\mu = 0$. In fact, by putting $\mu = 0$ in Eq. \eqref{schrEqMFz},  
we obtain, for the potential $v(z)$, the expression
\begin{equation}
v(z) = v_0 \tanh^2{z} = v_0 \left(1-\frac{1}{\cosh^2{z}}\right),  \label{pos-tel}
\end{equation}
which is clearly symmetric. The above potential is usually called the  P\"oschl-Teller potential and was introduced in \cite{eckart:1930,epstein:1930} 
(see also page 768 in Ref. \cite{morse:1953}).
The symmetric behaviour implies that $v_+ = v_-  = v_0$, meaning the reflecting states region is lost. Therefore, in this case, we may only distinguish two ranges of 
eigenenergies leading to the two following definitions of states regions.
\begin{enumerate}[label=\text{S\arabic*}]
\item \label{sR1} For $\varepsilon \in (0,v_0)$, we distinguish the bound states region.
\item \label{sR2} For $\varepsilon \in (v_0,\infty)$, we distinguish the free states region. 
\end{enumerate}
In this way, we obtain the following corollaries for $\mu=0$ as result of Theorems \ref{theoBounded} and \ref{theoUnbound}.

\begin{corollary}
The only non-trivial, bound solutions of time-independent Schrödinger equation with the P\"oschl-Teller potential \eqref{pos-tel}, $(\varepsilon,\psi_{\varepsilon}(z))$, for $\varepsilon \in (0,v_0)$ satisfying that 
$\psi_{\varepsilon}(z) \in L^2(\mathbb{R})$, that is, the solutions of the symmetric Problem \eqref{schroProb} 
for region \ref{sR1}, are
\begin{equation*}
\psi_{n}(z) = \mathcal{N}_n \sech^{b_n}{z} P_{n}^{(b_n, \; b_n)}\left(-\tanh{z}\right) = \psi_n(z),
\end{equation*}
where $\mathcal{N}_n$ is a normalizing constant such that $\int_{\mathbb{R}}|\psi_n(z)|^2dz=1$ and the eigenvalue $\varepsilon = \varepsilon_n \in (0,v_0)$ satisfy the relation
\begin{equation*}
\begin{gathered}
\varepsilon_n = v_0 - b_n^2, \quad b_n = \sqrt{v_0+\frac{1}{4}}-n-\frac{1}{2}, \quad n \in \mathbb{N} \cup \{0\},
\end{gathered}
\end{equation*}
valid whenever $b_n>0$.
\end{corollary}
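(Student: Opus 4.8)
The plan is to derive the corollary as the $\mu\to0$ specialization of Theorem~\ref{theoBounded}, checking only that no ingredient of that proof degenerates. Setting $\mu=0$ in \eqref{rmpot} produces the Pöschl--Teller potential \eqref{pos-tel}, and from \eqref{v+-} the two asymptotic levels merge, $v_+=v_-=v_0$, so that $\varkappa_+(\varepsilon)=\varkappa_-(\varepsilon)=\sqrt{v_0-\varepsilon}$ for every $\varepsilon\in(0,v_0)$. The only property of $\mu$ actually used in the proof of Theorem~\ref{theoBounded} is $\varkappa_\pm(\varepsilon)>0$ on the bound region: this keeps the exponents of the Pearson weight $\rho(u;\varepsilon)=(1-u)^{\varkappa_-}(1+u)^{\varkappa_+}$ positive, so $\rho$ is bounded and the boundary relations \eqref{orthoPol} hold; and it is also what forces the admissible pair to be $(\tau_1,\lambda_1)$, since with $a(\varepsilon)=0$ one has $\tau_1(u)=-2\bigl(1+b(\varepsilon)\bigr)u$, which vanishes at $u=0\in(-1,1)$ and has $\tau_1'<0$, whereas the weight attached to $\pi_2$ carries negative exponents. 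Hence Theorem~\ref{theoBounded} applies without change, and it only remains to read off the simplified data.

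From \eqref{R1ab} with $\varkappa_+=\varkappa_-$ we get $a(\varepsilon)=0$ and $b(\varepsilon)=\sqrt{v_0-\varepsilon}$, so $a_n=0$; consequently the exponential prefactor in \eqref{eigenfunctionsSchroMF} disappears and the two Jacobi parameters coincide, giving $\psi_n(z)=\mathcal{N}_n\,\sech^{b_n}{z}\,P_n^{(b_n,b_n)}(-\tanh z)$ with the same normalization $\int_\mathbb{R}|\psi_n|^2\,dz=1$ established in Appendix~\ref{comp-N_n}. Putting $\mu=0$ in \eqref{anbn} and \eqref{e_n} yields $b_n=\sqrt{v_0+\tfrac{1}{4}}-n-\tfrac{1}{2}$ and $\varepsilon_n=v_0-b_n^2$; alternatively one may insert $k_n=\varepsilon_n$ and $a_n=0$ directly into the eigenenergy equation \eqref{eigenvalueEqMF}. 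Finally, the finiteness condition \eqref{nCondBounded} collapses to $n<N(0,v_0)=\sqrt{v_0+\tfrac{1}{4}}-\tfrac{1}{2}$, which is exactly the stated requirement $b_n>0$; and since the critical amplitude $v_c=e^{2\mu}\tanh\mu$ vanishes at $\mu=0$, the ground state $n=0$ exists for every $v_0>0$.

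There is essentially no obstacle here: the whole content is the bookkeeping of the limit in a theorem already established. The single point deserving an explicit word — which I include above — is that the selection of the pair $(\tau_1,\lambda_1)$ and the admissibility of the weight $\rho$ in the proof of Theorem~\ref{theoBounded} rest solely on $\varkappa_\pm(\varepsilon)>0$, and therefore survive the coalescence $v_-=v_+$ intact; with that noted, the corollary follows by substitution.
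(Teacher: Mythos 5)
Your proposal is correct and matches the paper's treatment: the paper presents this corollary without a separate proof, simply as the $\mu=0$ specialization of Theorem \ref{theoBounded}, which is exactly what you carry out (with the helpful extra check that the choice of $(\tau_1,\lambda_1)$ and the admissibility of the weight $\rho$ survive the coalescence $v_-=v_+$, and that \eqref{nCondBounded} reduces to $b_n>0$). Nothing is missing.
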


\begin{corollary}
The non-trivial, unbound solutions of time-independent Schrödinger equation  with the P\"oschl-Teller potential \eqref{pos-tel}, $(\varepsilon, \psi_{\varepsilon}(z))$, for $\varepsilon \in [v_0,\infty)$ satisfying that $\psi_{\varepsilon}(z) \in L^{\infty}(\mathbb{R})$, that is, solves the symmetric Problem \ref{schroProb}, are given by
\begin{equation*}
\psi_{\varepsilon}(z) = C_1\psi_1(z;\varepsilon)+C_2\psi_2(z;\varepsilon), \quad C_1, \, C_2 \in\mathbb{C},
\label{solFreeStatesSym}
\end{equation*}
where
\begin{equation*}
\begin{split}
	\psi_{1}(z;\varepsilon) &= \sech^{b(\varepsilon)}z 
	F\left(-b_0+i\sqrt{\varepsilon-v_0},b_0+1+i\sqrt{\varepsilon-v_0},1+i\sqrt{\varepsilon-v_0};\frac{1-\tanh{z}}{2}\right)
	, \\
	\psi_{2}(z;\varepsilon) &= e^{b(\varepsilon) z}
	F\left(-b_0,1+b_0,1-i\sqrt{\varepsilon-v_0});\frac{1-\tanh{z}}{2}\right),
\end{split}
\end{equation*} and $b(\varepsilon) = i\sqrt{\varepsilon-v_0}$. 
Moreover, the continuous spectrum, $\varSigma \subset [v_0, \infty)$, always fills the open interval, this is, $(v_0,\infty) \subset \varSigma$, 
and only reaches $\varepsilon = v_0$ whenever $\alpha(v_0)=-b_0=-l$, that is, whenever $v_0=l(l+1)$ for some $l=1,2,\ldots$.
\end{corollary}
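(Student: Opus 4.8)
The plan is to derive this corollary by specializing Theorems \ref{theoUnbound} and \ref{theoUnbound1} to $\mu=0$, in which limit the Rosen–Morse potential \eqref{rmpot} degenerates to the Pöschl–Teller potential \eqref{pos-tel}. First I would record that for $\mu=0$ formula \eqref{v+-} gives $v_+=v_-=v_0$, so the reflecting region \ref{R2} collapses and Problem \ref{schroProb} reduces to the bound region $(0,v_0)$, the free region $(v_0,\infty)$ — which is exactly region \ref{R3} with $\mu=0$ — together with the single threshold state $\varepsilon=v_0$, which now simultaneously plays the roles of $v_+$ and $v_-$ appearing in Theorem \ref{theoUnbound1}.

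For $\varepsilon\in(v_0,\infty)$ I would apply Theorem \ref{theoUnbound} directly: the bounded solutions span the two–dimensional space $C_1\psi_1(z;\varepsilon)+C_2\psi_2(z;\varepsilon)$ with $\psi_{1,2}$ as in \eqref{solUnboundEq}, and it only remains to evaluate the parameters \eqref{R3ab} and \eqref{abgGausshdeMF} at $\mu=0$. Since $|\varkappa_+(\varepsilon)|=|\varkappa_-(\varepsilon)|=\sqrt{\varepsilon-v_0}$ on $(v_0,\infty)$, one gets $a(\varepsilon)=0$ and $b(\varepsilon)=i\sqrt{\varepsilon-v_0}$, so the factors $e^{-a(\varepsilon)z}$ and $\sech^{a(\varepsilon)}z$ in \eqref{solUnboundEq} become $1$; writing $b_0:=\sqrt{v_0+\tfrac14}-\tfrac12$ (the $n=0$ value in \eqref{anbn} at $\mu=0$) one finds $\alpha(\varepsilon)=-b_0+i\sqrt{\varepsilon-v_0}$, $\beta(\varepsilon)=b_0+1+i\sqrt{\varepsilon-v_0}$, $\gamma(\varepsilon)=1+i\sqrt{\varepsilon-v_0}$, whence $\alpha-\gamma+1=-b_0$, $\beta-\gamma+1=b_0+1$ and $2-\gamma=1-i\sqrt{\varepsilon-v_0}$. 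Substituting these into \eqref{solUnboundEq} reproduces exactly the stated $\psi_1$ and $\psi_2$, and the boundedness of both on $\mathbb{R}$ — hence $(v_0,\infty)\subset\varSigma$ with two–fold degeneracy — is inherited from the proof of Theorem \ref{theoUnbound}.

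For the threshold $\varepsilon=v_0$ I would invoke Theorem \ref{theoUnbound1}. Here $\varkappa_\pm(v_0)=0$, so $a(v_0)=b(v_0)=0$ and $\alpha(v_0)=\tfrac12-\sqrt{v_0+\tfrac14}=-b_0$; by part (2) of that theorem, $v_0$ lies in the continuous spectrum precisely when $\alpha(v_0)=-l$ for some $l\in\mathbb{N}\cup\{0\}$, i.e. $b_0=l$, i.e. $\sqrt{v_0+\tfrac14}=l+\tfrac12$, i.e. $v_0=l(l+1)$, and the constraint $v_0>0$ rules out $l=0$, leaving $l=1,2,\dots$. In that case formula \eqref{particularCasev-state} with $a(v_0)=0$ collapses the unique bounded solution to a multiple of $P_l^{(0,0)}(-\tanh z)$, i.e. a Legendre polynomial in $\tanh z$, which, as in the proof of Theorem \ref{theoUnbound1}, is never square–integrable; consistently, when $v_0\neq l(l+1)$ the two functions $\psi_1(z;v_0)$ and $\psi_2(z;v_0)$ coincide (since $b(v_0)=0$) and carry a logarithmic divergence as $z\to-\infty$, so $v_0\notin\varSigma$.

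The one step that needs genuine care is the threshold $\varepsilon=v_0$: because $v_+$ and $v_-$ have merged, the argument of Theorem \ref{theoUnbound1}(1) — which relied on $\varkappa_-(v_+)\neq0$ — no longer applies, so one must run the $v_-$ branch of that theorem, whose logarithmic blow-up (relation 15.4.21 of \cite{olver:2010}) is switched off exactly by $\alpha(v_0)=-l$, keeping in mind the gamma-function provisos of Remark \ref{disclaimerGamma}. Everything else is a direct substitution into the previously established theorems, so no new ideas are needed beyond bookkeeping of the $\mu=0$ limit.
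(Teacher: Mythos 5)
Your proposal is correct and follows essentially the same route as the paper, which presents this corollary as a direct specialization of Theorems \ref{theoUnbound} and \ref{theoUnbound1} to $\mu=0$ (so that $v_+=v_-=v_0$, $a(\varepsilon)=0$, $b(\varepsilon)=i\sqrt{\varepsilon-v_0}$, and the threshold is governed by the $v_-$ branch with $\alpha(v_0)=-b_0$). Your parameter bookkeeping and the identification $v_0=l(l+1)$, $l\geq 1$, match the paper's statement, including the observation that $\psi_1$ and $\psi_2$ degenerate at $\varepsilon=v_0$.
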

In fact, whenever $v_0=l(l+1)$, $l\in\mathbb{N} \setminus \{0\}$ we recover the recent results of Ref. \cite{raban:2022}. Note that by setting $l=0$, which is valid for the asymmetric case, $v_0$ would be null and would result in no potential at all. For these reasons, $\alpha(v_0)=0$ is not a possibility. Notice also that, whenever $\varepsilon = v_0$, $\psi_1(z;\varepsilon)$ and $\psi_2(z;\varepsilon)$ are linearly dependent.

We point out that, as in the case $\mu>0$, the relation $\psi_1(z;\epsilon)=\psi_2^{\star}(z;\epsilon)$, where
${z^\star}$ denotes the complex conjugate of $z$, still holds.

\subsection{The orthogonality and completeness relations \label{sec4}}
\label{app-close}
Next, we will discuss the orthogonality and completeness relations for the solutions of the Eq. \eqref{schrEqMFz} with $\mu \neq 0$. The particular case of $\mu=0$ has been recently discussed in Ref. \cite{raban:2022}.   
For doing that, we will follow the theory of the one-dimensional Schr\"odinger equation, 
developed in Chapter 3\S2 of Ref. \cite{takh:2008}. More exactly we will use the results discussed in Example 2.1 page 177 in 
\cite{takh:2008} adapted to our case. 

First of all, for $\varepsilon\geq v_-$, we need to define the so-called Jost functions which are two solutions of  Eq. \eqref{schrEqMFz}
that, in our case, should have the asymptotics
\begin{eqnarray}  \label{f1}
	f_1(z;k) & = &  e^{-ikz}+o(1)\mbox{ as } z\to-\infty, \quad    k=\sqrt{\varepsilon-v_-},\\
	\label{f2}
	f_2(z;k) & =  & e^{ik_1z}+o(1)\mbox{ as } z\to+\infty, \qquad   k_1=\sqrt{\varepsilon-v_+}.
\end{eqnarray}
From Theorem \ref{theoUnbound} and using equation 15.10.21 in  \cite[\S15.10(ii)]{olver:2010}, 
following the ideas in section \textit{Asymptotic behaviour for the unbound solutions} of Ref. \cite[pag 62-64]{gor:2022}, it can be shown that, for $\varkappa=\sqrt{v_+-v_-}$,
\begin{equation}\label{f1y2}
	\begin{split} 
		f_1(z;k) &= \frac{\Gamma'_1(\varepsilon)\left(2^{-b(\varepsilon_k)}\Gamma_2(\varepsilon_k)\psi_1(z;\varepsilon_k)-2^{a(\varepsilon_k)}\Gamma_1(\varepsilon_k)\psi_2(z;\varepsilon_k)\right)}{\Gamma_2(\varepsilon)\left[\Gamma_2(\varepsilon)\Gamma'_1(\varepsilon)-\Gamma_1(\varepsilon)\Gamma'_2(\varepsilon)\right]}, \\
		f_2(z;k) &=  2^{-b(\varepsilon_k)}\mathds{1}_{(0,\varkappa)}(k)\psi_1(z,\varepsilon_k)+2^{a(\varepsilon_k)}\mathds{1}_{(\varkappa,\infty)}(k) \psi_2(z,\varepsilon_k),
	\end{split}   
\end{equation}
where $ \varepsilon_k = k^2+v_-$, $\Gamma_{1,2}(\varepsilon) := \Gamma_{1,2}(\alpha(\varepsilon),\beta(\varepsilon),\gamma(\varepsilon))$ and $\Gamma'_{1,2}(\varepsilon) := \Gamma_{1,2}(\alpha'(\varepsilon),\beta'(\varepsilon),\gamma'(\varepsilon))$, 
being 
\begin{equation}
	\label{Gamma12} 
	\Gamma_1(\alpha,\beta,\gamma) = \frac{\Gamma(\gamma)\Gamma(\gamma-\alpha-\beta)}{\Gamma(\gamma-\alpha)\Gamma(\gamma-\beta)},   \qquad \Gamma_2(\alpha,\beta,\gamma) = \frac{\Gamma(\gamma)\Gamma(\alpha+\beta-\gamma)}{\Gamma(\alpha)\Gamma(\beta)},
\end{equation}
where $\alpha' = \alpha-\gamma+1$, $\beta' = \beta-\gamma+1$, $\gamma' = 2-\gamma$,
and, finally,
\begin{equation*}
	\mathds{1}_{J}(k)=
	\begin{dcases}
		1, & k \in J, \\
		0, & k \notin J.
	\end{dcases}
\end{equation*}
Lastly, we compute the so-called \textit{transmission} coefficient $c(k)$, 
$$
c(k):=\frac{1}{2ki} W[f_1(z;k),f_2(z;k)]=\frac{X(\varepsilon_k)\left(\mathds{1}_{(0,\varkappa)}(k)\Gamma_1(\varepsilon_k)+\mathds{1}_{(\varkappa,\infty)}(k)\Gamma_2(\varepsilon_k)\right)}{4ik}\left(a(\varepsilon_k)+b(\varepsilon_k)\right),
$$
where the formula (15.10.3) in  \cite[\S15.10(i)]{olver:2010} has been used.  

In this way, from the results of \cite[Example 2.1 page 177]{takh:2008} it follows the next theorem that also summarize what has been previously obtained.

\begin{theorem} \label{theo2.4}
	The spectrum of the Sturm-Liouville equation \eqref{schrEqMFz} is described as follows:
	
	\begin{itemize}
		\item There are a finite number, $n_b$, of eigenvalues, $\varepsilon_n\in(0,v_-)$, corresponding to the discrete spectrum of the Sturm-Liouville problem \eqref{schrEqMFz} given by the equation \eqref{eigenvalueEqMF}
		with normalized eigenfunctions, $\psi_n$, defined in Eq. \eqref{eigenfunctionsSchroMF}.
		\item The absolutely continuous spectrum of the Sturm-Liouville problem \eqref{schrEqMFz} fills $(v_-,+\infty)$. 
		For $k \in (0,\varkappa)$, $\varkappa= \sqrt{v_+-v_-}$, that is, for $\varepsilon\in (v_-,v_+)$, the spectrum is simple and the 
		normalized eigenfunction is given by
		$$
		u_0(z;k)=\frac{f_2(z;k)}{c(k)}.
		$$For $k\in(\varkappa,\infty)$, that is, for $\varepsilon\in(v_+,\infty)$, the spectrum has multiplicity two and the normalized 
		eigenfunctions are
		$$
		u_1(z;k)=\frac{k_1}{k \, c(k)}f_1(z;k),\qquad u_2(z;k)=\frac{f_2(z;k)}{c(k)}, 
		$$
		where $k$ and $k_1$ are given in Eqs. \eqref{f1} and \eqref{f2}, respectively, 
		and $f_{1,2}$ are given by \eqref{f1y2}.
	\end{itemize}
	
	Moreover, for all
	$\Phi(x)\in L^2(\mathbb{R})$ we have the expansion (recall $\varkappa= \sqrt{v_+-v_-}$)
	
	\begin{eqnarray}\label{exp-u-psi}
		\Phi(z)= & 
		\displaystyle  \int_{0}^{\varkappa} \!\!  c_{0}(k) u_{0}(z;k)dk   +
		\sum_{\ell=1}^2 \int_{\varkappa}^{\infty} \!\!  c_{\ell}(k) u_{\ell}(z;k)dk   
		+ \displaystyle \sum_{\kappa=1}^{n_b}  c_\kappa  \psi_\kappa(z),
	\end{eqnarray}
	where
	\begin{eqnarray} \nonumber
		c_\ell(k)= \int_{\mathbb{R}} {u_\ell^\star(y,k)}\Phi(y)dy, \quad \ell=0,1,2,\quad && 
		c_\kappa= \int_{\mathbb{R}} \psi_\kappa(y)\Phi(y)dy. 
	\end{eqnarray}
	Formula \eqref{exp-u-psi} is the so-called completeness relation for the set $\{u_0,u_1,u_2\}_{k\geq0}
	\cup \{\psi_{\kappa}\}_{\kappa=1,\ldots,n_b}$.
	
	Finally, the eigenfunctions constitute an orthogonal system, that is,
	\begin{equation}
		\label{rel-ort-psi-b}\int_{\mathbb{R}} \psi_\kappa(z) \psi_\nu(z)dz=\delta_{\kappa,\nu},\quad 
		\int_{\mathbb{R}} \psi_\kappa(z)  {u_\ell(z;k)}dx=0,   \quad
	\end{equation}
	\begin{equation} 
		\label{rel-ort-psi-no-con} \frac{1}{2\pi}
		\int_{\mathbb{R}} {u_\ell^\star(z;k)} u_\iota(z,m)dx=\delta_{\ell,\iota}\delta (k-m),  
	\end{equation}
	where $\nu,\kappa=1,2\ldots,n_b$; $\ell,\iota=0,1,2$; $k,m\geq0$; $\delta_{\kappa,\nu}$ is the Kronecker delta, 
	and  $\delta(x)$ denotes the delta Dirac function (which is not a function, but a distribution,
	so Eq. \eqref{rel-ort-psi-no-con} should be understood in the distributional sense \cite{strichartz:1994}).
	As before  ${z^\star}$ denotes the complex conjugate of $z$.
\end{theorem}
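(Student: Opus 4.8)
The plan is to obtain Theorem~\ref{theo2.4} as a specialization of the general spectral theory for one\-/dimensional Schr\"odinger operators with step\-/like potentials, in the precise form given in \cite[Example~2.1, page~177]{takh:2008}. That theory applies once three ingredients are supplied: (i) the operator $-d^2/dz^2+v(z)$ on $L^2(\mathbb{R})$ belongs to the admissible class; (ii) the Jost solutions $f_1(z;k),f_2(z;k)$ are known explicitly together with their connection coefficients; and (iii) the transmission coefficient is nonzero throughout the continuous spectrum, with controlled behaviour at the two thresholds. Ingredients (ii) and (iii) have essentially been produced in the paragraphs leading to the theorem, and the discrete part is Theorem~\ref{theoBounded} together with Remark~\ref{finitenessOfBoundSolutionsRemark}; so the proof consists of checking (i), assembling these pieces, and invoking the cited result.

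First I would verify admissibility. From \eqref{rmpot}, $v\in C^\infty(\mathbb{R})$ and, since $1+\tanh z=O(e^{-2|z|})$ as $z\to-\infty$ and $1-\tanh z=O(e^{-2|z|})$ as $z\to+\infty$, one gets $v(z)-v_-=O(e^{-2|z|})$ at $-\infty$ and $v(z)-v_+=O(e^{-2|z|})$ at $+\infty$, with the two limits $v_-<v_+$ distinct by \eqref{v+-}. This is exactly the class for which $-d^2/dz^2+v$ is self\-/adjoint on the natural domain, has purely absolutely continuous spectrum $[v_-,\infty)$ of multiplicity one on $(v_-,v_+)$ and two on $(v_+,\infty)$, no singular continuous part, and finitely many eigenvalues in $(0,v_-)$ --- the latter list being precisely $\{\varepsilon_n\}_{n=0}^{n_b}$ from Theorem~\ref{theoBounded} and Remark~\ref{finitenessOfBoundSolutionsRemark}. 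In this class the Jost solutions with asymptotics \eqref{f1} and \eqref{f2} exist.

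Next I would identify the Jost solutions with the hypergeometric solutions of Theorem~\ref{theoUnbound}. Starting from $\psi_1,\psi_2$ in the form \eqref{solUnboundEq} and applying the linear\-/transformation and asymptotic formulas for ${}_2F_1$ near $s=1$ and $s=0$ (namely $(15.10.21)$ and $(15.10.3)$ of \cite{olver:2010}, exactly as in \cite[pp.~62--64]{gor:2022}), one writes $f_1$ and $f_2$ as the combinations \eqref{f1y2} with the $\Gamma$\-/factors of \eqref{Gamma12}; matching leading coefficients in the expansions at $z\to\mp\infty$ fixes the normalizations in \eqref{f1} and \eqref{f2}. Since $W[f_1,f_2]$ is independent of $z$, evaluating it (again via a ${}_2F_1$ connection formula) gives the transmission coefficient $c(k)=\tfrac{1}{2ik}W[f_1,f_2]$ in the displayed closed form. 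One then checks $c(k)\neq0$ for every $k>0$, equivalently that no eigenvalue is embedded in the continuous spectrum and that there is no spectral singularity: the only place where a $\Gamma$\-/factor could meet a pole is $\varepsilon=v_-$, which is handled by Theorem~\ref{theoUnbound1}, where the candidate second bounded solution is shown not to be square\-/integrable. With $c(k)\neq0$ secured, the normalized generalized eigenfunctions are $u_0=f_2/c$ on $(v_-,v_+)$ and $u_1=(k_1/(kc))f_1$, $u_2=f_2/c$ on $(v_+,\infty)$, and \cite[Example~2.1]{takh:2008} yields the expansion \eqref{exp-u-psi}, the mixed orthogonality \eqref{rel-ort-psi-b}, and the $\delta$\-/normalization \eqref{rel-ort-psi-no-con}; the discrete orthonormality $\langle\psi_\kappa,\psi_\nu\rangle=\delta_{\kappa\nu}$ is the normalization built into Theorem~\ref{theoBounded} together with self\-/adjointness.

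The step I expect to be the real obstacle is this assembling, specifically the $\Gamma$\-/function bookkeeping through the ${}_2F_1$ connection formulas needed to land on exactly the normalizations used in \cite{takh:2008}, and --- tied to it --- a clean argument that $c(k)$ has no zeros for $k>0$ and that the thresholds $\varepsilon=v_\mp$ contribute no point mass to \eqref{exp-u-psi}. The half\-/bound (resonant) state at $\varepsilon=v_-$ appearing when $\alpha(v_-)=-l$ (Theorem~\ref{theoUnbound1}, Remark~\ref{existenceOfv-state}) is bounded but not in $L^2$, so it only touches a single value of the spectral parameter and does not affect the completeness or orthogonality statements; still, its presence has to be acknowledged so that the cited abstract theorem is applied in the regime where its hypotheses genuinely hold.
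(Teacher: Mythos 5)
Your proposal is correct and follows essentially the same route as the paper: both reduce the theorem to the general spectral result of \cite[Example 2.1, page 177]{takh:2008} by explicitly constructing the Jost solutions $f_1,f_2$ from the hypergeometric solutions of Theorem \ref{theoUnbound} via the connection formulas (15.10.21) and (15.10.3) of \cite{olver:2010}, computing the transmission coefficient $c(k)$ from the Wronskian, and feeding in the discrete data of Theorem \ref{theoBounded} and Remark \ref{finitenessOfBoundSolutionsRemark}. If anything, your write-up is more careful than the paper's, which does not explicitly verify the admissibility of the step-like potential, the non-vanishing of $c(k)$, or the treatment of the threshold $\varepsilon=v_-$.
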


\section{Applications \label{sec3}}

In order to apply all of these results, we discuss some representative examples. In the first of them, we choose $(\mu,v_0)$ such that the spectrum is represented by discrete and continuous parts. For the second choice of $(\mu,v_0)$, 
the state $\varepsilon=v_{-}$ belongs to the continuous spectrum and there are no bound states. The third example culminates this section by analyzing the linear stability of a family of kinks.   

\subsection{Discrete and continuous spectrum}

By setting $\mu=\ln(2)/2$ and $v_0=1$, it is simple to verify that the point $A=(\ln(2)/2,1)$ is above the critical amplitude curve represented in Fig.\ \ref{fig:vc} and, therefore, there is at least one bound state associated to the corresponding Sturm-Liouville Problem \ref{schroProb}, see Remark \ref{finitenessOfBoundSolutionsRemark}. Specifying these values in the formulas \eqref{v+-}, \eqref{nCondBounded}, and \eqref{anbn}, it can be found that $v_{+}=2$, $v_{-}=1/2$, $\alpha(v_{-})=(\sqrt{11}-\sqrt{3})/{\sqrt{2}}$, $b_0=(\sqrt{11/2}-1)/2$, and $a_0=3/(8 b_0)$. Accordingly to condition \eqref{nCondBounded}, there is only one valid index associated to a bound solution, which corresponds to $n=0$. From Eq. \eqref{e_n} and \eqref{eigenfunctionsSchroMF} we obtain $\varepsilon_0=(2 \sqrt{22}-5)/9$, and its corresponding eigenfunction 
\begin{equation} \label{ej1-psi1-r1}
	\psi_{0}(z)=\mathcal{N}_0 e^{-a_0 z} \sech^{b_0}(z), 
\end{equation}
respectively, where, recall, $\mathcal{N}_0$ is a normalizing factor.

Next, we deal with the continuous spectrum, $\varSigma$, by making use of Theorems \ref{theoUnbound} and \ref{theoUnbound1}. To begin with, since $\alpha(v_-) \notin \mathbb{Z}$, $\varepsilon=v_-=1/2 \notin \varSigma$ which means that the continuous spectrum satisfies $\varSigma = (1/2, \infty)$. Now, for its associated region \ref{R2}, this is, for $\varepsilon \in (1/2,2]$, the unbound solution is given by
\begin{equation} \label{ej1-psi1-r2}
	\psi_{1}(z;\varepsilon)=e^{-a(\varepsilon) z} \sech^{b(\varepsilon)}(z) F\left(b(\varepsilon)-b_0,b(\varepsilon)+b_0+1,\sqrt{2-\varepsilon}+1;\frac{1-\tanh z}{2}\right), 
\end{equation}
where $a(\varepsilon)=(\sqrt{2-\varepsilon}-i \sqrt{\varepsilon-1/2})/2$ and $b(\varepsilon)=a^\star(\varepsilon)$. Of course, the above function is not of integrable square but it is bounded. Finally, for its associated region \ref{R3}, this is, for $\varepsilon \in (2,+\infty)$, the unbound solutions are given by linear combinations of the following eigenfunction
\begin{equation*} 
	\psi_{1}(z;\varepsilon)=e^{-a(\varepsilon) z} \sech^{b(\varepsilon)}(z) F\left(b(\varepsilon)-b_0,b(\varepsilon)+b_0+1,i\sqrt{\varepsilon-2}+1;\frac{1-\tanh z}{2}\right), 
\end{equation*}
and its complex conjugate (see Remark \ref{complexConjugateRemark}), where $a(\varepsilon)=i\,(\sqrt{\varepsilon-2}- \sqrt{\varepsilon-1/2})/2$ and $b(\varepsilon)=i\,(\sqrt{\varepsilon-2}+ \sqrt{\varepsilon-1/2})/2$. 

\subsection{No bound states}
For our second example, we take $v_0=2/3$ and keep the same value for $\mu=\ln(2)/2$. This choice corresponds to point B in Fig.\ \ref{fig:vc}. Again, by employing \eqref{v+-}, \eqref{nCondBounded}, it can be found that 
$v_{-}=1/3$, $v_{+}=4/3$ and $N(\mu,v_0)=0=l$, according to previous notation. 
This means that, in this case, there are no bound states and the discrete spectrum is empty, according to Remark \ref{finitenessOfBoundSolutionsRemark}. 

However, this also implies (see Theorem \ref{theoUnbound} and \ref{theoUnbound1}) that the continuous spectrum $\varSigma = [1/3,\infty)$, this is, the limit state $\varepsilon=v_-$ is an actual eigenvalue. Therefore, the eigenfunction 
\begin{equation}\label{ej2-psi1-a}
	\psi_{0}(z) \propto e^{-z/2} \sech^{1/2}(z), 
\end{equation}
corresponds to the limit state $\varepsilon=v_-$, the following bounded function 
\begin{equation}\label{ej2-psi1-b}
	\psi_{1}(z;\varepsilon) \propto e^{-a(\varepsilon) z} \sech^{b(\varepsilon)}(z) F\left(b(\varepsilon)-1/2,b(\varepsilon)+3/2,\sqrt{4/3-\varepsilon}+1;\frac{1-\tanh z}{2}\right), 
\end{equation}
where $a(\varepsilon)=(\sqrt{4/3-\varepsilon}-i \sqrt{\varepsilon-1/3})/2$, and $b(\varepsilon)=a^\star(\varepsilon)$, corresponds to region \ref{R2}, this is, for $\varepsilon \in (1/3,4/3]$; and the free states region, this is, eigenenergies $\varepsilon \in (4/3,\infty)$, is represented through linear combinations of 
\begin{equation}\label{ej2-psi1-c}
	\psi_{1}(z;\varepsilon)=e^{-a(\varepsilon) z} \sech^{b(\varepsilon)}(z) F\left(b(\varepsilon)-1/2,b(\varepsilon)+3/2,i\sqrt{\varepsilon-4/3}+1;\frac{1-\tanh z}{2}\right), 
\end{equation} 
and its complex conjugate, where $a(\varepsilon)=i (\sqrt{\varepsilon-4/3}-\sqrt{\varepsilon-1/3})/2$, and $b(\varepsilon)=i (\sqrt{\varepsilon-4/3}+\sqrt{\varepsilon-1/3})/2$.

\subsection{Kink stability of $\varphi^{2p+2}$ field theory}
The static kink solution, $\varphi^{st}_{p}(x)$, of Eq.\  (\ref{eq:nlkg}) for the potential 
\begin{equation}
	\label{eq:pot}
	U(\varphi)=\frac{\varphi^2 (\varphi^p-2 \varphi_1)^2}{4 p^2},
\end{equation}
where $p \in \mathbb{N}, (p \ge 2)$, and $\varphi_1>0$, is given by 
\cite{casahorran:1991,saxena:2019}
\begin{equation}
	\label{eq:kink}
	\varphi_{p}^{st}(x)=\varphi_1^{1/p} \left(1+\tanh\left[\frac{\varphi_1 x}{\sqrt{2}}\right]\right)^{1/p}.
\end{equation}
The moving soliton can be obtained from $\varphi_{p}^{st}(x)$ by using the Lorentz transformation.  
The Lorentz invariance of Eq.\  (\ref{eq:nlkg}) also allows the kink (antikink) stability analysis to be reduced to considering a perturbation $\Psi(x,t)$ around the static kink (antikink). Therefore, Eq.\ (\ref{eq:nlkg})  is linearized around its 
static solution, that is, the function \cite{parmentier:1967,scott:1969} 
	$\varphi(x,t)= \nvarphi(x)+(c_1 e^{i\omega t}+c_2 e^{-i\omega t})\,\psi(x)$, 
where $\omega^2\in\mathbb{R}$,  \cite{parmentier:1967,scott:1969,saxena:2019} 
is introduced in Eq.\ (\ref{eq:nlkg}). The  
complex constants $c_1$ and $c_2$ used in the ansatz  are chosen such that $\Psi(x,t)\in\mathbb{R}$. As a consequence, 
$\psi(z)$ verifies the following  Sturm-Liouville problem
\begin{equation}
	\label{eq:sg8a}
	\psi_{zz}(z)+\left[\Lambda^2-V(z)\right]\psi(z)=0, 
\end{equation}
where $z=\varphi_1 x /\sqrt{2}$,  $\Lambda^2=2(\omega^2-\omega_{ph}^2)/\varphi_1^2$, being 
\begin{equation}
	\label{eq:wph}
	\omega_{ph}^2=\min\left\{\lim_{x \to - \infty} U''[\nvarphi(x)], 
	\lim_{x \to + \infty} U''[\nvarphi(x)] \right\}=\frac{2\varphi_1^2}{p^2},
\end{equation}
and the function $V(z)=2\,(U''[\nvarphi(z)]-\omega_{ph}^2)/\varphi_1^2$ satisfies that
\begin{equation}
	\label{eq:Vx}
	V(z)=-\frac{3(p+1)}{p^2} +\frac{2(p^2-1)}{p^2} \tanh(z)+\frac{(2p+1)(p+1)}{p^2}\tanh^2(z),
\end{equation}
which is the well-known Rosen-Morse potential \cite{rosen:1932}. 
In particular, for $p=1$, Eq.\ \eqref{eq:Vx} reduces to the symmetric P\"oschl-Teller potential \cite{poschl:1933}. 
Equation \eqref{eq:sg8a} can be transformed into Eq.\ 
\eqref{schrEqMFz} by denoting 
\begin{equation} \label{eq:para} 
	v_{0} = \dfrac{3(p+1)(p+2)}{p (2p+1)}, \, \,
	\tanh \mu= \dfrac{p-1}{2p+1}, \, \, \varepsilon= \Lambda^2+\dfrac{v_{0}(p+2)}{3p}, \, \, v(z)=V(z)+\dfrac{v_0 (p+2)}{3p}.
\end{equation} 
Therefore, from Eqs.\  \eqref{nCondBounded} and \eqref{eq:para} it can be shown that 
$v_{-}=(p+2)^2 (p+1)/(p^2 (2p+1))$, $v_{+}=9 (p+1)/(2p+1)$, and  
$\Lambda^2(\varepsilon) \equiv \varepsilon-v_-$. It is straightforward to verify that, here, the condition $v_0> e^{2\mu}\tanh \mu$ is satisfied for all values of $p$, see the orange circles  represented in Fig.\ \ref{fig:vc} for some values of $p$. Only the limit value of $(\mu,v_0)$ when $p \to +\infty$ lies on the curve (see the red square in Fig.\ \ref{fig:vc}). Therefore, there always is at least one bound state for every $p \ge 2$. 

Indeed, from Eq. \eqref{nCondBounded} and by denoting $N(\mu,v_0) = N(p)$, it can be found that
$$
N(p)=\dfrac{p+1 -\sqrt{p^2-1}}{p},
$$
meaning $0<N(p)<1$ for all $p \ge 2$. This implies that there always is only one bound state, and the solutions of Eqs.\ \eqref{anbn}  and \eqref{nCondBounded} are 
$n=0$, $a_0=(p-1)/p$, and $b_0=(p+1)/p$. From Eqs.   \eqref{eigenfunctionsSchroMF} and \eqref{e_n}, we find $\varepsilon_0=(p+5)/(2 p +1)$, and its corresponding eigenfunction 
\begin{equation*} 
	\psi_{0}(z)=\mathcal{N}_0 e^{-(p-1) z/p} \sech^{(p+1)/p}(z), 
\end{equation*}
respectively. Notice that, in this case, from Eq.\ \eqref{eq:para}, $\Lambda^{2}(\varepsilon_0)=-4/p^2$, implying that $\omega(\varepsilon_0)=0$. This frequency corresponds to the well-known Goldstone mode \cite{goldstone:1975} meaning $\psi_0(z) \propto d\varphi_{p}^{st}/dz$. 

In addition, given that $0<N(p)<1$, the state $\varepsilon=v_{-}$ does not belong to the continuum. Hence, $\varepsilon \in (v_{-},v_{+}]$ and $\varepsilon \in(v_{+},+\infty)$ define the associated \ref{R2} and \ref{R3} regions, 
respectively; and theirs corresponding eigenfunctions can be obtained in a similar way as shown in the previous examples. In this way, for the remaining frequencies, $\omega(\varepsilon)$, associated to the continuous spectrum, $\varSigma = (v_-,\infty)$, given that $\Lambda^2(\varepsilon)>\Lambda^2(v_-)=0$, they satisfy that 
\begin{equation*}
\omega^2(\varepsilon)>\omega^2_{ph}=\frac{2\varphi^2_1}{p^2}>0,
\end{equation*}
meaning the kinks are stable for every value of $p \ge 2$, $p \in \mathbb{N}$.

The case $p=2$ was studied in Ref. \cite{lohe:1979}, where 
states $\varepsilon=v_{-}$ and $\varepsilon=v_{+}$ were considered as part of regions \ref{R2} and \ref{R3}, respectively. However, through this section, we have shown that the limit state $\varepsilon=v_{-}$ cannot belong to region \ref{R2}, since, its associated eigenfunctions are unbounded accordingly to Theorem \ref{theoUnbound1}; and, also, that the limit state $\varepsilon=v_{+}$ cannot belong to region \ref{R3}. Indeed, it can be verified that the corresponding eigenfunction defined in Eq. (31) of \cite{lohe:1979} diverges as $x \to -\infty$ whenever $\varepsilon=v_{-}$. Moreover, the eigenfunctions defined by Eqs.\ (33a)-(34a) in Ref. \cite{lohe:1979} lose their oscillatory character as $x \to \infty$ and are linearly dependent whenever $\varepsilon=v_{+}$, but, since they are bounded, they should be considered as part of region \ref{R2}.

 \section{Conclusions } \label{sec5}

 This work investigates the solutions $(\varepsilon,\psi(z))$ of the  time-independent Schr\"odinger equation for the Rosen-Morse type potential.  This equation is first transformed into a generalized hypergeometric equation or GHE by using the universal change of variable $u=-\tanh(z)$.  
 By means of the Nikiforov-Uvarov method, its solution is written as $\psi(u)=\phi(u)\,y(u)$, where $y(u)$ satisfies another GHE which is simpler than the one $\psi(u)$ solves. 
 To achieve this simplification, $\phi(u)$ is forced to fulfill the  
  first-order linear differential Eq. \eqref{phi} for a convenient $\pi(u)$, which is easily obtained once all the coefficients and polynomials related to the HDE are determined.  
  
  Now, since the Rosen-Morse potential is asymmetric for $\mu > 0$, we distinguish tree different regions of eigenenergies, $\varepsilon$, by employing the asymptotic values of the potential, $v_{\pm}=v_0 e^{\pm 2 \mu}$. In this way, the main results of 
  the current investigation are formulated and proven in theorems \ref{theoBounded}   
  and \ref{theoUnbound}, which characterize, respectively, the discrete and the continuous spectra. In the former case, 
  the proof of the theorem follows from the Nikiforov-Uvarov theorem \ref{theoA1}, which allows us to obtain the set of 
  $n_b$ bound solutions, $\psi_{n}(z)$, defined in Eq. \eqref{eigenfunctionsSchroMF}, where their corresponding $\varepsilon_n \in (0,v_{-})$ satisfies relation \eqref{eigenvalueEqMF}. In the later case, their associated eigenfunctions, 
  $\psi(z)$, are no longer necessarily square-integrable functions, but should remain bounded. As a consequence, for every  
 $\varepsilon \in (v_{-},v_+)$, this is, for \ref{R2}, their unbound solution is only represented by the one eigenfunction defined through Eq.\ \eqref{solRefStates}, as the second linearly independent solution of the Gauss' HDE is disregarded for being unbounded. On the contrary, for every $\varepsilon \in (v_{+},\infty)$, this is, for region \ref{R3}, both linearly independent solutions, $\psi_1(z;\varepsilon)$ and $\psi_2(z;\varepsilon)=\psi_1^{\star}(z;\varepsilon)$ given by \eqref{solUnboundEq}, of the Gauss' HDE satisfy the boundedness condition which ultimately implies that their unbound solutions are represented through Eq.\ \eqref{solFreeStates}. Theorem \ref{theoUnbound} is completed by Theorem \ref{theoUnbound1} where the limit energy states $\varepsilon=v_{\pm}$ are considered. In this theorem, it is proven that $\varepsilon=v_{+}$ is an energy state corresponding to \ref{R2} and therefore, its  unbound solution is given by Eq.\  \eqref{psivplus}. Moreover, it is shown that the state $\varepsilon=v_-$ belongs to the continuous spectrum only for certain values of the pair $(\mu,v_0)$. Specifically, it only belongs to region \ref{R2} whenever $N(\mu,v_0) = l$ for some $l=0,1,2,\dots$
 
  In the first two examples of Section \ref{sec3}, it is discussed how the two control parameters of the Rosen-Morse potential,   $\mu$ and $v_0$, modify the shape of the associated spectrum. Firstly, by modifying the number of bound states, $n_b$, through inequality \eqref{nCondBounded}. Secondly, by directly changing the computation of the bound solutions, $(\varepsilon_n,\psi_n(z))$, through Eqs.\  \eqref{e_n}, \eqref{eigenfunctionsSchroMF} and Eq. \eqref{anbn}; and of the unbound solutions, $(\varepsilon, \psi_{\varepsilon}(z))$, through Eqs.\ \eqref{R2ab}, \eqref{R3ab} and \eqref{abgGausshdeMF}. In addition, for the second example, the special condition $N(\mu,v_0)=0$ is satisfied which implies that $\varepsilon=v_-$ is actually part of the continuous spectrum.
  
  As the main application of the results obtained in Section \ref{sec2}, it is justified the stability of the static kinks, $\varphi_{p}^{st}(x)$, of the non-linear Klein-Gordon equations with $\varphi^{2p+2}$ ($p \ge 2$) type potentials. We also show that for every $p \ge 2$, the spectrum of the corresponding Sturm-Liouville problem contains only one bound state, the so-called Goldstone mode, with zero frequency and with eigenfunction $\psi_0(x) \propto d\varphi_{p}^{st}(x)/dx$, which means that the system has no internal modes. This result answers the question about the existence of internal modes and its dependence on $p$  raised  by Saxena, Christov and Khare in Ref. \cite{saxena:2019}.  Moreover, we exactly define limits of the continuous spectrum, which implies that the continuum for the frequencies is given by $\omega \in (\omega_{ph},+\infty)$, where, notice, the frequency $\omega=\omega_{ph}$ is excluded. Exception occurs for $p=1$ ($\mu=0$). In this case, the Rosen-Morse potential becomes into the P\"osch-Teller potential, the discrete spectrum contains not only the Goldstone mode, but also one internal mode, and in addition $\omega=\omega_{ph}$ belongs to the continuous limit (see \cite{raban:2022}).
  
Finally, the orthogonality and completeness relations satisfied by the set of eigenfunctions, are  given in theorem \ref{theo2.4}. 
Knowledge of these relationships will be useful in describing the dynamics of kinks under external perturbations  or interacting with antikinks.

\appendix

\section{Some results related to the hypergeometric type functions} \label{apen}

In this Appendix we will enumerate some relevant results needed for a better understanding of the
main results of the present work. We start with the following theorem \cite[page 67]{nu:1988}.

\begin{theorem}[Nikiforov-Uvarov] \label{theoA1}
Let the hypergeometric differential equation
\begin{equation}
\label{AHDE} \sigma(u) y''(u) + \tau(u) y'(u) +\lambda y(u)=0, \quad u \in  (a,b).
\end{equation}
Let $\rho(u) \in L^{\infty}(a,b)$ and satisfy condition 
\begin{equation}
\sigma(u)\rho(u)u^k\Bigr|_{x=a} =    \sigma(u)\rho(u)u^k\Bigr|_{u=b}=0,\quad \forall k \in \mathbb{N} \cup \{0\},
\label{orthoPol}
\end{equation}
where $\rho(u)$ is any solution of the Pearson equation
\begin{equation}
(\sigma(u)\rho(u))'=\tau(u)\rho(u).
\label{rho}
\end{equation}
Then, non-trivial solutions of \eqref{AHDE}, $y_{\lambda}(u)$, satisfying that $y_{\lambda}(u)\sqrt{\rho(u)} \in L^{2}(a,b) \cap L^{\infty}(a,b)$ 
exist only when
\begin{equation}\label{eq-lambda}
\lambda = \lambda_n = -n\tau'(u)-\frac{n(n-1)}{2}\sigma''(u), \quad n \in \mathbb{N} \cup \{0\}
\end{equation}
and have the form $(y_{\lambda}(u)=y_{\lambda_n}(u)=y_{n}(u))$
\begin{equation}\label{for-rod}
y_n(u)=\frac{B_n}{\rho(u)}\left[\sigma^n(u)\rho(u)\right]^{(n)}, 
\end{equation}
where $B_n$ is a constant.
Therefore, they are the classical orthogonal polynomials associated to the weight function $\rho(u)$, i.e., they
satisfy the orthogonality relation 
\begin{equation}\label{rel-ort-pol}
\int_{a}^{b} y_n(u)y_m(u)\rho(u)\,du=\delta_{n,m} d_n^2,
\end{equation}
where $\delta_{n,m}$ is the Kronecker delta symbol and $d_n^2$ is the squared norm of the 
polynomials. 
\label{eigenTheo}
\end{theorem}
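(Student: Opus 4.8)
The plan is to follow the classical route for equations of hypergeometric type. First I would put \eqref{AHDE} in self-adjoint form: multiplying by $\rho(u)$ and using the Pearson equation \eqref{rho} to rewrite $\sigma\rho\,y''+\tau\rho\,y'=(\sigma\rho\,y')'$, one sees that \eqref{AHDE} is equivalent to the Sturm--Liouville equation $(\sigma\rho\,y')'+\lambda\rho\,y=0$ on $(a,b)$. Granting for the moment that the admissible solutions are polynomials, the orthogonality relation \eqref{rel-ort-pol} is then immediate: for $n\neq m$, integrating $y_n(\sigma\rho\,y_m')'-y_m(\sigma\rho\,y_n')'$ by parts gives $(\lambda_n-\lambda_m)\int_a^b\rho\,y_ny_m\,du=\bigl[\sigma\rho\,(y_ny_m'-y_my_n')\bigr]_a^b$, and the boundary term vanishes by \eqref{orthoPol} applied to the polynomial entries, while $\lambda_n\neq\lambda_m$ because the numbers $\lambda_k$ in \eqref{eq-lambda} are pairwise distinct under the standing nondegeneracy hypotheses; one then sets $d_n^2=\int_a^b y_n^2\,\rho\,du$.

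The heart of the matter is thus to show that every admissible solution is a polynomial and that $\lambda$ is quantized. For this I would use the descent property of hypergeometric-type equations: differentiating \eqref{AHDE} $m$ times shows that $v_m:=y^{(m)}$ solves $\sigma\,v_m''+\tau_m\,v_m'+\mu_m\,v_m=0$ with $\tau_m=\tau+m\sigma'$ (still of degree at most one) and $\mu_m=\lambda+m\tau'+\tfrac{m(m-1)}{2}\sigma''$, and that the associated weight is $\rho_m:=\sigma^m\rho$; in self-adjoint form, using $\sigma\rho_m=\rho_{m+1}$ and $v_m'=v_{m+1}$, this reads $(\rho_{m+1}v_{m+1})'=-\mu_m\rho_m v_m$. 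Multiplying the self-adjoint equation for $v_m$ by $\overline{v_m}$, integrating over $(a,b)$ and integrating by parts yields, once $\sigma\geq 0$ and $\rho\geq 0$ on $(a,b)$ are taken into account,
\[
\mu_m\int_a^b\rho_m|v_m|^2\,du=\int_a^b\rho_{m+1}|v_{m+1}|^2\,du\;\geq\;0,
\]
provided the boundary term $\bigl[\rho_{m+1}v_m'\overline{v_m}\bigr]_a^b$ vanishes. Granting this, $\mu_m\geq0$ whenever $v_m\not\equiv0$; but $\tau'+m\sigma''<0$ for all large $m$, so $\mu_m$ is eventually strictly decreasing with $\mu_m\to-\infty$, and hence $v_n\equiv0$ for some minimal $n\geq1$. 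Thus $y$ is a polynomial; letting $N$ be its exact degree, $v_N=y^{(N)}$ is a nonzero constant, and substituting into the equation for $v_N$ forces $\mu_N=0$, i.e.\ $\lambda=\lambda_N=-N\tau'-\tfrac{N(N-1)}{2}\sigma''$, which is \eqref{eq-lambda}.

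To obtain the Rodrigues representation \eqref{for-rod} I would iterate the identity $\rho_m v_m=-\mu_m^{-1}(\rho_{m+1}v_{m+1})'$ from $m=N-1$ down to $m=0$; here $\mu_m=\lambda_N-\lambda_m\neq0$ for $0\leq m\leq N-1$ by the distinctness of the $\lambda_k$. This gives
\[
\rho\,y=\rho_0 v_0=\frac{(-1)^N}{\mu_0\mu_1\cdots\mu_{N-1}}\,\bigl(\rho_N v_N\bigr)^{(N)},
\]
and since $v_N$ is constant and $\rho_N=\sigma^N\rho$ this is precisely $y_N=\frac{B_N}{\rho}\bigl[\sigma^N\rho\bigr]^{(N)}$ with $B_N=(-1)^N v_N/(\mu_0\cdots\mu_{N-1})$; these are the classical orthogonal polynomials for the weight $\rho$, which together with the orthogonality above closes the argument.

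I expect the only genuinely delicate step to be the vanishing of the boundary terms $\bigl[\rho_{m+1}v_m'\overline{v_m}\bigr]_a^b$ for an a priori arbitrary solution of \eqref{AHDE} lying in the weighted space $y\sqrt{\rho}\in L^2(a,b)\cap L^\infty(a,b)$. The hypothesis $\rho\in L^\infty(a,b)$ together with $\sigma\rho\,u^k\big|_{a,b}=0$ controls the decay of $\rho_{m+1}=\sigma^{m+1}\rho$ at the endpoints, but one still has to rule out that the solution, or its derivatives, blow up there faster than $\rho_{m+1}$ decays. I would handle this by carrying out all the integrations by parts first on a subinterval $[a+\varepsilon,b-\varepsilon]$ and then letting $\varepsilon\to0^+$, using the weighted integrability of $y$ and a bootstrap on the $v_m$ from the equation itself to check that each boundary contribution tends to zero; the remaining steps are formal manipulations.
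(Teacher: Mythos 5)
The first thing to note is that the paper does not prove this statement at all: Theorem \ref{eigenTheo} is imported verbatim from Nikiforov--Uvarov \cite[page 67]{nu:1988} and used as a black box, so there is no in-paper proof to compare against. Measured against the standard proof in that reference, your route is genuinely different. Nikiforov and Uvarov first construct the polynomial eigenfunctions \eqref{for-rod} for $\lambda=\lambda_n$, prove that the system $\{y_n\sqrt{\rho}\}$ is \emph{complete} in $L^2(a,b)$ (a separate, nontrivial closure theorem), and then show that any admissible $y$ with $(\lambda-\lambda_n)\int y\,y_n\rho\,du=\bigl[\sigma\rho\,(y'y_n-y\,y_n')\bigr]_a^b=0$ must vanish unless $\lambda$ hits some $\lambda_n$. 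Your descent argument (differentiate $m$ times, pass to self-adjoint form with weight $\rho_m=\sigma^m\rho$, and force $\mu_m\ge 0$ until $\mu_m\to-\infty$ gives a contradiction) avoids completeness entirely and directly shows the solution is a polynomial; that is an attractive trade. The price is that you must integrate by parts against the \emph{unknown} solution and all of its derivatives, whereas the completeness route only ever pairs the unknown solution with known polynomials, for which the boundary terms are controlled by \eqref{orthoPol} with almost no work. You also silently use $\sigma>0$ and $\rho>0$ on $(a,b)$ and the pairwise distinctness of the $\lambda_k$; these are standard in this setting (cf.\ Proposition \ref{easesChoiceProp}) but should be flagged, and the existence half of the theorem (that \eqref{for-rod} actually defines a degree-$n$ polynomial solving \eqref{AHDE} with $y_n\sqrt{\rho}\in L^2\cap L^\infty$) still needs the usual separate verification.

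The genuine gap is exactly the one you flag and then defer: the vanishing of $\bigl[\rho_{m+1}v_m'\overline{v_m}\bigr]_a^b$ and the finiteness of $\int_a^b\rho_m|v_m|^2\,du$ for every $m\ge 1$. The hypotheses only control $y\sqrt{\rho}$, not $y$ itself (which may blow up where $\rho\to 0$) and certainly not $y',y'',\dots$, and the entire chain $\mu_m\ge 0$ collapses if a single boundary term survives. This can be repaired, but it is the real content of the proof, not a formality: e.g.\ for $m=0$ one integrates $(\sigma\rho y')'=-\lambda\rho y$ to see that $\sigma\rho y'$ has a limit $L$ at each endpoint, shows $L\neq 0$ would force $|y|\sqrt{\rho}\to\infty$ there (contradicting $y\sqrt{\rho}\in L^\infty$), and then quantifies the decay rate of $\sigma\rho y'$ against the possible growth $|y|\le C/\sqrt{\rho}$ to conclude $\sigma\rho y'\overline{y}\to 0$; an analogous (and increasingly delicate) estimate is needed at every level $m$, with the integrability $\int\rho_{m+1}|v_{m+1}|^2<\infty$ extracted inductively from the identity itself. ``Truncate to $[a+\varepsilon,b-\varepsilon]$ and bootstrap'' names the strategy but does not carry it out, so as written the argument is a plausible and genuinely different proof sketch rather than a proof.
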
 
The following Proposition \cite[page 67]{nu:1988} will be also useful.

\begin{prop}
\label{easesChoiceProp}
Let $\rho(u)$ satisfy condition \eqref{orthoPol} for the classical orthogonal polynomials on $(a,b)$. Then, $\tau(u)$ has to vanish at some point of $(a,b)$ and $\tau'(u)<0$.
\end{prop}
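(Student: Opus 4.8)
The plan is to read off everything from the Rodrigues formula \eqref{for-rod} and the Pearson equation \eqref{rho} applied to the two lowest-degree members of the family. First I would observe that $y_0$ is a nonzero constant while, by \eqref{rho}, $y_1(u)=\dfrac{B_1}{\rho(u)}\bigl(\sigma(u)\rho(u)\bigr)'=B_1\,\tau(u)$, so $\tau$ is (a constant multiple of) the degree-one classical orthogonal polynomial. The orthogonality relation \eqref{rel-ort-pol} between $y_0$ and $y_1$ then gives $\int_a^b\tau(u)\rho(u)\,du=0$. Since $\rho>0$ on $(a,b)$, a nonzero constant $\tau$ is impossible; and $\tau\equiv0$ is also impossible, for then \eqref{rho} would make $\sigma\rho$ constant, hence identically zero by \eqref{orthoPol} with $k=0$, contradicting that $\rho$ is a genuine weight. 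Thus $\deg\tau=1$, and $\int_a^b\tau\rho\,du=0$ with $\rho>0$ forces the unique root $c$ of $\tau$ to lie in the open interval $(a,b)$ (otherwise $\tau$ keeps a constant sign there and the integral cannot vanish). This proves the first assertion.

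For the sign of $\tau'$ I would integrate over $(a,b)$ the identity $\dfrac{d}{du}\bigl(u\,\sigma(u)\rho(u)\bigr)=\sigma(u)\rho(u)+u\,\tau(u)\rho(u)$, which again comes from \eqref{rho}; the left-hand side vanishes by \eqref{orthoPol} with $k=1$, so $\int_a^b\sigma\rho\,du=-\int_a^b u\,\tau\rho\,du$. Writing $\tau(u)=\tau'(u-c)$, where $\tau'$ denotes the (constant) derivative, one has the elementary identity $u\,\tau(u)=\tfrac{1}{\tau'}\tau(u)^2+c\,\tau(u)$, and using $\int_a^b\tau\rho\,du=0$ once more yields $\int_a^b\sigma\rho\,du=-\tfrac{1}{\tau'}\int_a^b\tau^2\rho\,du$. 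Since $\sigma>0$ and $\rho>0$ on $(a,b)$ the left side is strictly positive, while $\int_a^b\tau^2\rho\,du>0$ because $\tau\not\equiv0$; hence $-1/\tau'>0$, i.e. $\tau'<0$. A quicker but less quantitative variant: for $u$ slightly larger than $a$ one has $u<c$, so if $\tau'>0$ then $\tau(u)<0$ there, whence $(\sigma\rho)'=\tau\rho<0$ just to the right of $a$; since $(\sigma\rho)(a)=0$ this makes $\sigma(u)\rho(u)<0$ near $a^+$, contradicting $\sigma,\rho>0$ on $(a,b)$.

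The only point that needs care — and the step I expect to be the main obstacle to phrase cleanly — is the standing hypothesis hidden in ``condition \eqref{orthoPol} for the classical orthogonal polynomials on $(a,b)$'': namely that $\rho$ is a strictly positive, polynomial-integrable weight and that $\sigma$ is strictly positive on the open interval $(a,b)$. These are precisely what legitimise the sign conclusions $\int_a^b\rho>0$, $\int_a^b\sigma\rho>0$, $\int_a^b\tau^2\rho>0$ used above, and they hold in each classical case (Jacobi on $(-1,1)$ with $\sigma=1-u^2$, Laguerre on $(0,\infty)$ with $\sigma=u$, Hermite on $\mathbb{R}$ with $\sigma=1$); once they are made explicit the remaining arguments are entirely elementary. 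I would state them at the outset of the proof and then carry out the two computations above.
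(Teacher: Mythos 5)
Your proof is correct. Note first that the paper itself does not prove this proposition: it is stated as a known result imported from Nikiforov--Uvarov (cited to page 67 of their book), so there is no in-paper argument to compare against; your write-up is essentially the standard argument from that source. Two small remarks. First, you can get $\int_a^b\tau\rho\,du=0$ without invoking the orthogonality relation \eqref{rel-ort-pol} at all: by the Pearson equation \eqref{rho}, $\int_a^b\tau\rho\,du=\int_a^b(\sigma\rho)'\,du=\sigma\rho\bigr|_a^b=0$ directly from \eqref{orthoPol} with $k=0$; this avoids any circularity worry about using a property of the orthogonal family to establish a hypothesis needed for that family to exist. Likewise your key identity can be reached in one step by parts, $\int_a^b\tau^2\rho\,du=\int_a^b\tau\,(\sigma\rho)'\,du=-\tau'\int_a^b\sigma\rho\,du$, which is equivalent to your computation. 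Second, your ``quicker variant'' for $\tau'<0$ implicitly assumes $a$ is finite and that $\sigma\rho$ extends continuously to $a$ with value $0$; for an infinite left endpoint (Hermite) it does not apply as phrased, whereas your integral argument does, so the integral version should be the one retained. Your explicit flagging of the standing hypotheses $\rho>0$ and $\sigma>0$ on $(a,b)$ is appropriate and is exactly what the phrase ``condition \eqref{orthoPol} for the classical orthogonal polynomials'' is meant to encode; in the setting of this paper they hold with $\sigma(u)=1-u^2$ on $(-1,1)$ and $\rho$ given by \eqref{eqrho}.
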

The properties of the Jacobi polynomials are summarized in the following proposition \cite[\S5]{nu:1988}.
\begin{prop}\label{jac-pol}
The Jacobi polynomials $P_n^{(\alpha,\beta)}(u)$ are the polynomial solutions of the
HDE \eqref{AHDE} with 
\begin{equation*}
\sigma(u)=1-u^{2}, \; \; \tau(u)=-\left(\alpha+\beta+2\right)u+\beta-\alpha,\quad \lambda_n= n\left(n+\alpha+\beta+1\right).
\label{jacobiPar}
\end{equation*}
They satisfy the orthogonality relation \eqref{rel-ort-pol} on $[-1,1]$ with respect to the weight function $\rho(u)=\left(1-u\right)^{\alpha}\left(1+u\right)^{\beta}$,
with for  $\alpha>-1, \; \beta>-1$, and
$$
d^2_n=\frac{2^{\alpha+\beta+1} \Gamma(n+\alpha+1) \Gamma(n+\beta+1)}{n!(2n+\alpha+\beta+1)\Gamma(n+\alpha+\beta+1)}.
$$
They can be calculated by the Rodrigues formula \eqref{for-rod} with $B_n=\frac{(-1)^n}{2^nn!}$, and 
can be written in terms of the Gauss's terminating hypergeometric series \eqref{hypergeometricSeries}
\begin{equation}\label{jac-hyp}
\begin{split}
\displaystyle
P_n^{(\alpha,\beta)}(u)= & \frac{(\alpha+1)_n}{n!} \displaystyle\,_2
F _1\bigg(\displaystyle \begin{array}{c}{-n,n+\alpha+\beta+1 }\\{ \alpha+1}\end{array}\bigg|
\frac{1-u}{2} \bigg).
\end{split}
\end{equation}
\end{prop}
Finally, for the hypergeometric functions, we have the following result \cite[\S20-21]{nu:1988}.
\begin{prop}
\label{propGaussHypergeometricFunction}
 Gauss's hypergeometric function, $F(\alpha,\beta,\gamma;u)$ is defined by the power series 
\begin{equation}
F(\alpha,\beta,\gamma;u)=\sum_{n=0}^{\infty}\frac{(\alpha)_n (\beta)_n}{(\gamma)_n n!}u^n, \quad \gamma \neq -k,\,\, \forall k \in \mathbb{N} \cup \{0\},
\label{hypergeometricSeries}
\end{equation}
which is analytic function for  $\left|u\right|<1$. Here $(a)_k$ denotes the Pochhammer symbol 
\begin{equation}\label{poch}
(a)_0=1 ,\hspace{.25cm} (a)_k=a(a+1)(a+2) \cdots(a+k-1),\quad
k=1,2,3,\dots \,\,\, , 
\end{equation}
The function $F(\alpha,\beta,\gamma;u)$ is a solution of the so-called Gauss' differential equation 
\begin{equation}
u\left(1-u\right)y''(u)+\left[\gamma-\left(\alpha+\beta+1\right)u\right]y'(u)-\alpha \beta y(u)=0.
\label{gaussEq}
\end{equation}
\end{prop}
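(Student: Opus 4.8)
The plan is to verify the two assertions of the proposition separately: first the radius of convergence of the series \eqref{hypergeometricSeries}, and then that its sum solves Gauss's differential equation \eqref{gaussEq}. For the convergence claim, I would write the coefficients as $c_n = (\alpha)_n(\beta)_n/[(\gamma)_n\, n!]$, which are well defined precisely because the hypothesis $\gamma \neq -k$ for $k \in \mathbb{N} \cup \{0\}$ guarantees $(\gamma)_n \neq 0$ for every $n$. Using the elementary identity $(a)_{n+1} = (a+n)(a)_n$, which follows directly from the definition \eqref{poch}, the ratio of consecutive coefficients is
\[
\frac{c_{n+1}}{c_n} = \frac{(\alpha+n)(\beta+n)}{(\gamma+n)(n+1)},
\]
whose modulus tends to $1$ as $n \to \infty$. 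By the ratio test the series has radius of convergence $1$, unless it terminates --- which happens exactly when $\alpha$ or $\beta$ is a non-positive integer, in which case $F$ reduces to a polynomial, analytic on all of $\mathbb{C}$. In either case the series converges absolutely on the disk $|u| < 1$ and hence defines a function analytic there, since every convergent power series is holomorphic inside its disk of convergence.

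For the differential-equation claim, I would use that a power series may be differentiated term by term within its disk of convergence, so that $y = \sum_{n \geq 0} c_n u^n$ has derivatives $y' = \sum_{n \geq 0}(n+1)c_{n+1} u^n$ and $y'' = \sum_{n \geq 0}(n+2)(n+1)c_{n+2} u^n$. Substituting these into \eqref{gaussEq} and performing the appropriate index shifts on each of the five terms $u y''$, $-u^2 y''$, $\gamma y'$, $-(\alpha+\beta+1)u y'$, and $-\alpha\beta y$, I would collect the coefficient of $u^n$. After grouping, the $c_{n+1}$ contributions combine into $(n+1)(n+\gamma)c_{n+1}$ and the $c_n$ contributions into $-[\,n(n-1) + (\alpha+\beta+1)n + \alpha\beta\,]c_n$. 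Invoking the factorization $n(n-1) + (\alpha+\beta+1)n + \alpha\beta = (n+\alpha)(n+\beta)$, the vanishing of every coefficient is seen to be equivalent to the recurrence
\[
(n+1)(n+\gamma)\,c_{n+1} = (n+\alpha)(n+\beta)\,c_n,
\]
which is exactly the relation already satisfied by the explicit coefficients $c_n$ used in the ratio computation above. Hence the series solves \eqref{gaussEq} identically on $|u| < 1$.

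There is no serious obstacle here: the result is classical and the argument is entirely elementary. The only points demanding care are bookkeeping ones --- keeping the index shifts mutually consistent when re-expressing $y''$ and $y'$ as power series in the common exponent $u^n$, and confirming the quadratic factorization $(n+\alpha)(n+\beta)$ --- together with the minor observation that the terminating (polynomial) case must be noted separately so that the analyticity statement is not asserted vacuously where the ratio test does not directly apply. The legitimacy of the term-by-term differentiation and of the rearrangement of the series is supplied by uniform convergence on compact subsets of the disk $|u|<1$, itself a standard consequence of the radius of convergence established in the first step.
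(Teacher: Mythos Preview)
Your argument is correct and entirely standard: the ratio test gives radius of convergence $1$, and term-by-term differentiation together with the recurrence $(n+1)(n+\gamma)c_{n+1}=(n+\alpha)(n+\beta)c_n$ verifies that the series satisfies \eqref{gaussEq}. Note, however, that the paper does not actually prove this proposition; it is stated in the appendix as a known classical fact with a citation to \cite[\S20--21]{nu:1988}, so your write-up already supplies more detail than the paper itself.
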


\section{Computing the    normalizing factor $\mathcal{N}_n$ in Theorem \ref{theoBounded}}
\label{comp-N_n}
Here, we will show how the normalizing factor $\mathcal{N}_n$ in Theorem \ref{theoBounded} can be computed. 
From Eq. \eqref{eigenfunctionsSchroMF} it follows that 
\begin{equation}\label{int-norm}
\begin{split}
1 & =\int_\mathbb{R} | \psi_{\varepsilon_n}(z)|^2dz=\mathcal{N}_n^2 
 \int_{-1}^{1}(1-u)^{b_n-a_n-1}(1+u)^{b_n+a_n-1}  [P_{n}^{(b_n-a_n,b_n+a_n)}\left(u\right)]^2 du \\
 & = \mathcal{N}_n^2  2^{2b_n-1}\int_{0}^{1}(1-\zeta)^{b_n-a_n-1}\zeta^{b_n+a_n-1}  [P_{n}^{(b_n-a_n,b_n+a_n)}\left(2\zeta-1\right)]^2 d\zeta.
\end{split}
\end{equation}
For $n=0$ the above integral gives
$$
\mathcal{N}_0^2 =  2^{1-2b_0} [B(b_0-a_0,b_0+a_0)]^{-1},
$$
where $B$ denotes de Beta function of Euler \cite[\S5.12]{olver:2010}. 

Whenever $n\geq1$, a general analytical formula can be obtained by using the so-called linearization coefficients  
as well as the explicit expression of the Jacobi polynomials. Since the analytical formulas are quite cumbersome 
(see e.g. \cite{chaggara:2010,rahman:1981}), we will show a simpler way for obtaining them. In order to do so, we can use 
the following explicit expression for the Jacobi polynomials \cite[Eq. 18.18.26]{olver:2010}
\begin{equation} \nonumber
P_{n}^{(b_n-a_n,b_n+a_n)}\left(2\zeta-1\right)=
\sum_{\ell=0}^{n}c_{\ell,n}\zeta^\ell,   \quad c_{\ell,n}=\frac{(n+2b_n+1)_\ell (-b_n-a_n-n)_{n-\ell}}{\ell!(n-\ell)!},
\end{equation}
and compute 
\begin{equation} \nonumber
[P_{n}^{(b_n-a_n,b_n+a_n)}\left(2\zeta-1\right)]^2=\sum_{k=0}^{2n}  \pi_{k,n}  \zeta^k,
\quad 
\pi_{k,n}=  \sum_{\ell=0}^k c_{\ell,n} c_{k-\ell,n}.
\end{equation}

Thus, Eq. \eqref{int-norm}
leads to 
\begin{equation} \nonumber
\mathcal{N}_n^2  2^{2b_n-1}\sum_{k=0}^{2n}   \pi_{n,k} B(b_n-a_n,b_n+a_n+k)=1,
\end{equation}
from where $\mathcal{N}_n^2$ can be easily found. 

\section*{Acknowledgments}
We thanks Siannah P. Rivas for drawing the Ref. \cite{casahorran:1991} to our attention. 
R.A.N. was partially supported by PID2021-124332NB-C21 (FEDER(EU)/Ministerio de Ciencia e 
Innovación-Agencia Estatal de Investigación) and FQM-262 (Junta de Andalucía).
N.R.Q. was partially supported by the Spanish projects PID2020-113390GB-I00 (MICIN), PY20$\_$00082 
(Junta de Andalucia),  A-FQM-52-UGR20 (ERDF-University of Granada), and the Andalusian research group FQM-207.




\end{document}